




\documentclass{ecai} 



\usepackage{latexsym}
\usepackage{amssymb}
\usepackage{amsmath}
\usepackage{amsthm}
\usepackage{booktabs}
\usepackage{enumitem}
\usepackage{graphicx}
\usepackage{color}
\usepackage{dsfont}

\usepackage{bm}
\usepackage{algorithm}
\usepackage{algpseudocode}
\usepackage{caption}
\usepackage{subcaption}
\usepackage{hyperref} 


\newtheorem{theorem}{Theorem}
\newtheorem{lemma}[theorem]{Lemma}

\newtheorem{definition}{Definition}



\newcommand{\BibTeX}{B\kern-.05em{\sc i\kern-.025em b}\kern-.08em\TeX}


\begin{document}


\begin{frontmatter}

\paperid{2635}

\title{Learning in Repeated Multi-Objective Stackelberg Games with Payoff Manipulation}
\author{\fnms{Phurinut}~\snm{Srisawad}}
\author{\fnms{Juergen}~\snm{Branke}}
\author{\fnms{Long}~\snm{Tran-Thanh}} 

\address{University of Warwick, United Kingdom}

\begin{abstract} We study payoff manipulation in repeated multi-objective Stackelberg games, where a leader may strategically influence a follower’s deterministic best response, e.g., by offering a share of their own payoff. We assume that the follower’s utility function, representing preferences over multiple objectives, is unknown but linear, and its weight parameter must be inferred through interaction. This introduces a sequential decision-making challenge for the leader, who must balance preference elicitation with immediate utility maximisation. We formalise this problem and propose manipulation policies based on expected utility (EU) and long-term expected utility (longEU), which guide the leader in selecting actions and offering incentives that trade off short-term gains with long-term impact. We prove that under infinite repeated interactions, longEU converges to the optimal manipulation. Empirical results across benchmark environments demonstrate that our approach improves cumulative leader utility while promoting mutually beneficial outcomes, all without requiring explicit negotiation or prior knowledge of the follower’s utility function.
\end{abstract}

\end{frontmatter}


\section{Introduction}

Stackelberg games provide a foundational framework for hierarchical decision-making, where a leader commits to a strategy first and a follower responds optimally. Traditionally, Stackelberg games have been employed to model a variety of competitive environments, where the leader seeks to maximise their own utility while anticipating the follower’s best response \citep {labbe1998bilevel,yu2017incentive,sinha2018stackelberg}. This first-mover advantage allows the leader to strategically shape the boundaries of the follower’s decision space through their own actions.

However, many real-world decision-making problems are inherently multi-objective \citep{ruadulescu2024world,godhrawala2023dynamic,brown2012multi}, including traffic management, taxation policy design, resource allocation, security, goods exchange, etc. 
Agents may have heterogeneous preferences over multiple objectives which leads to the possibility of mutually beneficial exchanges. For instance, a leader may prefer apples over bananas, while the follower favours bananas over apples. Given a leader's action, the follower selects their best-response strategy to maximise banana acquisition, which might inadvertently result in the leader receiving fewer apples. However, if the leader can influence the follower by, for example, offering a portion of their own payoff, they can steer the follower toward actions that also benefit the leader, resulting in a mutually beneficial outcome. 

 In the absence of explicit negotiations, this paper explores the concept of \textit{payoff manipulation}, where the manipulator offers an incentive, e.g., a fraction of their own payoff to influence the other players’ decision-making. With multi-objective payoffs, the manipulator must know an opponent's utility function to create a strictly preferable action. However, an opponent's utility function is typically unknown, and the manipulator needs to learn it. By observing how the follower reacts to various incentives over multiple interactions, the leader can gain insights into the follower’s preferences and adapt their strategy accordingly. This introduces an additional strategic layer, where the leader must optimise not only their immediate utility but also their ability to shape the follower’s behaviour over time.

The payoff manipulation or incentivisation has been studied in various contexts in game theory, economics, and bandit literature. \citep{eidenbenz2007manipulation} studies mechanism design where the payoff matrix of a game can be influenced by an interested third party. \citep{gupta2015pays} studies 
incentive equilibria in bi-matrix games. \citep{mahesh2023playing} introduces the use of payoff matrix manipulation by one player in the context of repeated polymatrix games. \cite{scheid2024incentivized} studies a learning problem where the principal incentivises the agent in a bandit setting.

Our work extends this concept to a multi-objective setting, where utility signifies preference levels. The contribution of this paper is to formulate a learning problem in repeated multi-objective Stackelberg games via payoff manipulation, focusing on how the leader navigates the trade-off between learning the follower’s preferences and maximising their immediate utility.  Assuming a linear utility function for the follower with unknown parameters, we propose the EU and the longEU policy to maximise myopic and non-myopic expected utility, respectively, and derive a sequential manipulation strategy to incentivise the follower. We prove that if the number of game rounds is infinite, the longEU policy will eventually converge to the optimal manipulation.  We empirically test our policies over several benchmarks to observe the proposed policy's behaviour and investigate its performance in terms of cumulative regret.  

\section{Problem Formulation: Multi-objective Stackelberg Game with Leader's Manipulation}
\label{sec:problem}
We consider a multi-objective Stackelberg game with two players, Leader (L) and Follower (F), and finite discrete action spaces $\mathcal{A}^L,\mathcal{A}^F$, respectively. At each round $t$, the leader first chooses the action $l^t \in \mathcal{A}^L$, then the follower observes the leader's action and decides their action $f^t \in \mathcal{A}^F$. Denote $X(l,f)\in \mathbb{R}^D$ as the payoff vector from the payoff matrix $X$ at the leader's action (row) $l$ and the follower's action (column) $f$. At the end of round $t$, they independently receive a payoff, a multi-dimensional vector $X^L(l^t,f^t),X^F(l^t,f^t) $, respectively. Each player has a different real-valued utility function $u^L(\bm{x})$ and $u^F(\bm{x})$, respectively, to measure their preference on a payoff vector $\bm{x}$. The follower is assumed to be a \textit{fully rational} and myopic agent who only chooses the best response (BR) action based on a leader's action $l$ and their utility function; $BR(l)=\underset{f \in \mathcal{A}^F}{\mathrm{argmax}} \ u^F(X^F(l,f))$. Therefore, we can restrict a leader's policy to a pure strategy since the follower knows the leader's action in advance. 

In this game, the leader can incentivise or manipulate the follower to choose a different action (the manipulated action) by offering the follower a payoff $\bm{c} \in \mathcal{C} \subseteq \mathbb{R}^D$ (or spending a cost $\bm{c}$) where $ \mathcal{C}$ is the space of possible manipulation costs. The follower will change their original BR if their utility of the manipulated action is higher than or equal to their best response without manipulation. The leader's goal is to maximise their overall utility, which may involve paying a cost for manipulation to incentivise the follower to respond with a favourable (for both the leader and the follower) action.
Mathematically, the leader wants to solve the following optimal manipulation problem (OMP),
\begin{equation} \label{eq:optimal_policy}
\begin{array}{llll} 
   \tag{OMP}  &\text{max}_{\bm{c},l,f} & u^L(X^L(l,f)-\bm{c}) \\
     s.t. & \text{(IC)}  & u^F(X^F(l,f)+\bm{c}) \geq u^F(X^F(l,BR(l))) \\
     & \text{(C1)} & c_d  \geq 0, \forall d \in [D]   \\
     \text{or} & \text{(C2)} & 0 \leq c_d \leq X^L_d(l,f), \forall d \in [D]  
\end{array} 
\end{equation} 
The incentive constraint (IC) is the requirement to change the follower's BR to the manipulated action $f$. The additional cost constraints (C1) and (C2) depend on the characteristics of the problem. 
\begin{itemize}
    \item (C1): Allow only positive costs similar to bribing the follower. This can make the leader's payoff negative in some objectives. One interpretation is that the leader has access to additional resources for manipulation.  
    \item (C2): Similar to (C1) but this constraint doesn't allow to spend more than what the leader will receive.
\end{itemize}

One can consider a parametric utility function denoted by $u(\bm{x};\bm{w})$ where $\bm{x}$ is a payoff and $\bm{w}$ is a parameter (or weight) from a compact space $\mathcal{W}$. In this paper, we assume linear utility functions for both leader and follower, i.e., $u(\bm{x};\bm{w})=\bm{w}\cdot\bm{x}=\sum_{d=1}^D{w_dx_d}$ where the sum of normalised weights in all objectives is $\sum_{d=1}^D{w_d}=1, \bm{w} \geq 0$. Some results on a non-linear Cobb-Douglas utility function can be found in the Appendix~\ref{sec:appendix_results}. The leader is assumed to know the follower's payoff matrix $X^F$, their own matrix $X^L$ and their own utility function parameters, so they need to estimate the follower's utility function from the history of observed follower's actions given the leader's actions. 
In every round $t$, the leader chooses an action $l^t$ and spends a manipulation cost (i.e., offers an incentive) $\bm{c}^t$. The follower chooses an action greedily, and the chosen action is observed by the leader.
Given the follower's action and the known payoff matrix, the leader can deduce a pairwise comparison feedback on the follower's utility: The payoff of the action picked is preferred over the payoff of all other actions. The set of $N$ pairwise comparisons data in the history is denoted by $\mathcal{D}:=\{(\bm{x}_i,\bm{y}_i,z_i)\}_{i=1}^N$ where $z_i=1$ if the follower prefers payoff $\bm{x}_i$ over $\bm{y}_i$; and 0 otherwise. Note that the term "manipulation" usually refers to the triple of choices (leader's action, manipulated action, manipulation cost) or $(l,f,\bm{c})$; nevertheless, we also use "manipulation" to refer to the manipulation cost only if the context is clear.

\section{Solution Concept}
  The goal for our setting is slightly different from common Stackelberg settings, which aim to quickly converge to a Stackelberg equilibrium, since the leader needs to trade off between having better knowledge of the follower's utility function and maximising their utility value. The goal for the leader, therefore, becomes minimising cumulative regret up to round $T$ defined by:
\begin{align*}
    \text{CR(T)}=\sum_{t=1}^T \left[ \left(u^L(X^L(l^*,f^*)-\bm{c}^* \right) - \left(u^L(X^L(l^t,f^t)-\bm{c}^t \right) \right]
\end{align*}
where the optimal manipulation is leader's action $l^*$, manipulated action $f^*$ and cost $\bm{c}^*$ that solve the maximization problem \eqref{eq:optimal_policy}.

\begin{lemma} \label{thm:optimal_cost_C1}
Under constraint (C1) and linear utility, there is an optimal manipulation cost $\bm{c}^*$ from the solution of \eqref{eq:optimal_policy} which is either
\begin{itemize}
    \item [1)] $\bm{c}^*=\bm{0}$ (no manipulation) or
    \item [2)] $\bm{c}^*=(0,...,0,c_d,0,...,0)$ where $c_d>0$ for some $d \in [D]$ (spending only one objective)
\end{itemize}
\end{lemma}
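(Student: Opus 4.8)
The plan is to fix the action pair coming from an optimal solution and reduce \eqref{eq:optimal_policy} to a one-shot linear program in the cost vector. Let $(\bm{c}^*, l^*, f^*)$ solve \eqref{eq:optimal_policy}. Holding $l = l^*$, $f = f^*$ fixed, linearity of $u^L, u^F$ together with the fact that under (C1) the only remaining restrictions on $\bm{c}$ are (IC) and $\bm{c} \ge 0$ turn the residual problem into
\begin{equation*}
\min_{\bm{c} \ge 0} \ \bm{w}^L \cdot \bm{c} \quad \text{s.t.} \quad \bm{w}^F \cdot \bm{c} \ge \delta, \qquad \delta := u^F\!\big(X^F(l^*, BR(l^*))\big) - u^F\!\big(X^F(l^*, f^*)\big) ,
\end{equation*}
because $u^L(X^L(l^*,f^*) - \bm{c}) = \bm{w}^L \cdot X^L(l^*,f^*) - \bm{w}^L\cdot\bm{c}$ and the first term is a constant; note $\delta \ge 0$ since $BR(l^*)$ maximises the follower's unmanipulated utility. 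Any optimal solution $\hat{\bm{c}}$ of this LP still satisfies (IC) and (C1) and has $\bm{w}^L\cdot\hat{\bm{c}} \le \bm{w}^L\cdot\bm{c}^*$, so $(\hat{\bm{c}}, l^*, f^*)$ is again optimal for \eqref{eq:optimal_policy}; hence it suffices to exhibit an LP optimum of the claimed shape.

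I would then split on $\delta$. If $\delta = 0$ — in particular when $f^* = BR(l^*)$ — then $\bm{c} = \bm{0}$ is feasible, and since $\bm{w}^L \ge 0$ we have $\bm{w}^L\cdot\bm{c} \ge 0$ for every feasible $\bm{c}$, so $\bm{0}$ is optimal, giving alternative 1). If $\delta > 0$, I would first note that an optimal $\bm{c}$ can be taken with $\bm{w}^F\cdot\bm{c} = \delta$: if the constraint were slack, shrink any coordinate with $w^F_d > 0$ until it binds, which never increases $\bm{w}^L\cdot\bm{c}$; and coordinates with $w^F_d = 0$ contribute nothing to (IC) while only adding cost, so they may be zeroed. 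Writing $\bm{w}^L\cdot\bm{c} = \sum_{d:\, w^F_d>0} \frac{w^L_d}{w^F_d}\,(w^F_d c_d)$ expresses the cost as $\delta$ times a weighted average (weights $w^F_d c_d / \delta \ge 0$ summing to $1$) of the ratios $w^L_d/w^F_d$, so $\bm{w}^L\cdot\bm{c} \ge \delta \cdot \min_{d:\, w^F_d>0} (w^L_d/w^F_d)$, with equality when the entire cost is placed on a single minimising index $j$: $c_j = \delta/w^F_j > 0$ and $c_d = 0$ otherwise. Such a $j$ exists because $\sum_d w^F_d = 1$ forces some $w^F_d > 0$ (which also shows the LP is feasible and, being bounded below by $0$, attains its optimum). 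This is alternative 2).

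I do not anticipate a real obstacle; the argument is elementary linear programming. The step needing the most care is the reduction: one must check that, under (C1), fixing the action pair leaves $\bm{c}$ free except through (IC) and $\bm{c}\ge 0$, so re-optimising the cost alone cannot violate any other part of \eqref{eq:optimal_policy} — this is precisely where (C2) would behave differently, since there the box constraint $c_d \le X^L_d(l,f)$ couples $\bm{c}$ to the chosen $f$. A minor point is the degenerate handling of zero follower weights and of ties among the ratios $w^L_d/w^F_d$, which only affects which single coordinate $j$ is selected, not the structure of the solution. As an alternative to the weighted-average bound, one may invoke that a feasible bounded LP attains its optimum at a vertex of $\{\bm{c}\ge 0 : \bm{w}^F\cdot\bm{c}\ge\delta\}$ and verify directly that each vertex is either $\bm{0}$ (feasible only when $\delta = 0$) or supported on exactly one coordinate.
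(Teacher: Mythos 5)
Your proof is correct and follows essentially the same route as the paper's: fix the action pair, observe that under (C1) and linear utilities the residual problem in $\bm{c}$ is a linear program of the form $\min\{\bm{w}^L\cdot\bm{c} : \bm{w}^F\cdot\bm{c}\ge\delta,\ \bm{c}\ge 0\}$, and conclude that an optimum sits at the origin (when $\delta=0$, i.e.\ manipulating the BR) or on a single coordinate axis. The only difference is presentational — the paper simply invokes that the LP attains its optimum at a corner point of this region, while you construct the single-coordinate optimum explicitly via the ratio $w^L_d/w^F_d$ weighted-average bound (and note the vertex argument as an alternative), which if anything makes the strict positivity $c_d>0$ in case 2) more transparent.
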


\begin{lemma} \label{thm:optimal_cost_C2}
Under constraint (C2) and linear utility, there is an optimal cost $\bm{c}^*$ from the solution of \eqref{eq:optimal_policy} which is either
\begin{itemize}
    \item [1)] $\bm{c}^*=\bm{0}$ (no manipulation) or
    \item [2)] $\bm{c}^*=(\bm{0},c_d,\bm{c}_{max})$ where $c_d\geq 0$  and $\bm{c}_{max}=(X^L_{d_1},X^L_{d_2},...,X^L_{d_k})$ for some $d,d_1,d_2,...,d_k \in [D]$ \\ (spending the maximum payoff from some objectives and cover the rest from one objective)
\end{itemize}
\end{lemma}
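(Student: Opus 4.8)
The plan is to fix the leader's action $l^*$ and manipulated action $f^*$ that appear in an optimal solution of \eqref{eq:optimal_policy}, and then analyse only the residual optimisation over the cost vector $\bm{c}$. Denote by $\bm{w}^L,\bm{w}^F$ the weight vectors of the (linear) leader and follower utilities, so $u^L(\bm{x})=\bm{w}^L\cdot\bm{x}$ and $u^F(\bm{x})=\bm{w}^F\cdot\bm{x}$. Then $u^L(X^L(l^*,f^*)-\bm{c})=\bm{w}^L\cdot X^L(l^*,f^*)-\bm{w}^L\cdot\bm{c}$, so maximising it over $\bm{c}$ is, up to an additive constant, the same as minimising $\bm{w}^L\cdot\bm{c}$; and (IC) becomes the single linear inequality $\bm{w}^F\cdot\bm{c}\ge\Delta$ with $\Delta:=u^F(X^F(l^*,BR(l^*)))-\bm{w}^F\cdot X^F(l^*,f^*)$ a constant. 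Together with (C2) this is a linear program: minimise $\bm{w}^L\cdot\bm{c}$ subject to $\bm{0}\le\bm{c}\le X^L(l^*,f^*)$ and $\bm{w}^F\cdot\bm{c}\ge\Delta$, over a bounded polytope (which is nonempty because $(l^*,f^*)$ came from an optimal solution).

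First I would dispose of the trivial case: if $\Delta\le 0$, then $\bm{c}=\bm{0}$ is feasible and, since $\bm{w}^L\ge\bm{0}$, optimal, giving alternative (1). Otherwise $\Delta>0$, and I claim some optimiser lies on the hyperplane $\bm{w}^F\cdot\bm{c}=\Delta$: given any feasible $\bm{c}$ with $\bm{w}^F\cdot\bm{c}>\Delta$, the scaled point $\theta\bm{c}$ for the appropriate $\theta\in[0,1)$ still lies in the box, satisfies (IC) with equality, and has objective value $\theta(\bm{w}^L\cdot\bm{c})\le\bm{w}^L\cdot\bm{c}$. Restricting attention to this hyperplane, the minimum of the linear objective over the resulting (bounded) polytope is attained at a vertex, and it remains to show a vertex has the claimed form.

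For that I would use an explicit exchange argument rather than quoting vertex characterisations. Suppose an optimal $\bm{c}^*$ has two coordinates $a\neq b$ with $0<c^*_a<X^L_a(l^*,f^*)$ and $0<c^*_b<X^L_b(l^*,f^*)$. The perturbation $c_a\mapsto c_a+\varepsilon w^F_b,\ c_b\mapsto c_b-\varepsilon w^F_a$ leaves $\bm{w}^F\cdot\bm{c}$ (hence (IC)) unchanged and stays in the box for small $|\varepsilon|$, while changing the objective by $\varepsilon(w^L_a w^F_b-w^L_b w^F_a)$; choosing the sign of $\varepsilon$ makes this non-increasing, and if it is strictly decreasing we already contradict optimality, so the term must vanish and we may push $\varepsilon$ until one of the two coordinates hits a box bound without losing optimality. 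Iterating eliminates all but at most one strictly interior coordinate, leaving $\bm{c}^*$ with every coordinate equal to $0$ or to $X^L_d(l^*,f^*)$ except possibly a single coordinate $c_d\in[0,X^L_d(l^*,f^*)]$ — which, after reindexing, is exactly alternative (2), with $\bm{c}_{\max}$ collecting the coordinates pinned to their maxima.

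The main obstacle is not conceptual but a matter of making the reduction airtight: one must record that optimality of $(l^*,f^*)$ guarantees feasibility of the residual LP (so infeasible pairs need not be discussed), handle the degenerate case $w^L_a w^F_b=w^L_b w^F_a$ in the exchange step (the objective is then flat along the move, which is precisely why the move is still permitted), and deal with coordinates where $w^F_d=0$ (spending there cannot help (IC), so without loss of generality $c_d=0$). None of these requires an idea beyond the linear-programming structure; they are just what converts the intuitive "spend greedily in the objectives that are cheapest for the leader per unit of follower utility" into the stated dichotomy.
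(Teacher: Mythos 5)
Your proof is correct, and at its core it follows the same reduction as the paper: fix the action pair, observe that with linear utilities the residual problem in $\bm{c}$ is a linear program over the hyperrectangle $\{\bm{c}: \bm{0}\le\bm{c}\le X^L(l^*,f^*)\}$ cut by the (IC) half-space, and conclude that an optimum can be taken with at most one coordinate strictly between its box bounds. Where you differ is in how that structural fact is established: the paper simply invokes corner-point optimality of linear programs and reads off the candidate optima geometrically as the points where the (IC) hyperplane cuts the edges of the hyperrectangle (or $\bm{c}=\bm{0}$, or no intersection), whereas you give a self-contained argument---first scaling down to make (IC) active when $\Delta>0$, then a pairwise exchange along the direction $(w^F_b,-w^F_a)$ that preserves (IC) and the box and pins all but one coordinate to a bound. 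Your route is more elementary (no LP vertex theory is quoted) and makes explicit the degenerate cases the paper's terse proof glosses over: flat objective directions with $w^L_a w^F_b=w^L_b w^F_a$, coordinates with $w^F_d=0$, and feasibility of the residual LP; it also only needs to examine the action pair realised at an optimum, while the paper sweeps over all pairs, including those for which (IC) cannot be met inside the hyperrectangle. The paper's version buys brevity; yours buys rigour at precisely the step where the published argument is informal.
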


Generally, the leader cannot always benefit from a manipulation, it depends on the game and the participants' utilities. 
\begin{definition}
    A non-beneficial game is a game whose  optimal cost of the \eqref{eq:optimal_policy} is a zero-vector cost. Otherwise, it is called a beneficial game.
\end{definition}

\section{Related Work}
There are two key areas of related works: Multi-objective normal form games and learning in repeated Stackelberg games. The goal of learning the follower's preferences also links our work to preference elicitation, and we refer the interested reader for further context to \citep{qian2015learning}. In the following, we will discuss each in turn.

  \paragraph{MONFGs} are a generalisation of (single-objective) normal-form games to vectorial payoffs.  To practically deal with vectorial rewards, \citep{roijers2013survey} consider multiple conflicting objectives, the agents should balance these in such a way that the user utility derived from the outcome of a decision problem is maximised. Taking such a utility-based approach, there are two main optimisation criteria focusing on the expected scalarised returns (ESR) and the scalarised expected returns (SER) \citep{ruadulescu2020multi}. The SER criterion computes the expected value of the payoffs of a joint strategy first and then applies the utility function.  Alternatively, the ESR criterion applies the utility function before computing the expectation.  Which of these criteria should be considered best depends on how the games are used in practice. However, ESR and SER criteria are equivalent in cases where linear utility functions are used \citep{ruadulescu2020utility}. Therfore, in our setting the minimising-regret goal is equivalent to maximising a cumulative utility under SER and ESR criteria.  One recent work \citep{ropke2022preference} introduces the idea of preference communication between two players inspired by Stackelberg games under the SER criterion. Instead of manipulating the payoff matrix, players learn policies by repeatedly playing a base MONFG and following the preference communication protocol, which alternates between them being a leader who communicates a certain preference and a follower who responds to this communication. Some protocols can accelerate convergence. Additionally, some protocols may result in a cyclic Nash equilibrium. For readers interested in broader perspectives, we refer to the recent survey by \citep{ruadulescu2020multi} on multi-objective decision-making in multi-agent systems (MAS).

 \paragraph{Learning in Repeated Stackelberg Games} is another relevant field for our study. Here, we will provide a non-exhaustive overview of recent settings on this topic. \cite{sessa2020learning} assumes there are different parameterised follower types that can be chosen by an adversary. The leader knows the reward function but needs to learn the unknown response function of the follower while minimising the cumulative regret. \citep{zhao2023online} proposes a cooperative Stackelberg game with an omniscient follower where both players receive a noisy reward from a shared reward function and only the follower has full knowledge of such a reward function. The leader’s objective is to minimise the cumulative regret given the best response of the follower. \citep{assos2024maximizing} considers the follower side where the follower knows the leader’s payoff matrix and the leader’s
online learning algorithm. The follower maximises its own reward while taking into account the leader’s behaviour. A very recent work \citep{maddux2025no} studies single-leader multi-follower Stackelberg games.
Assuming that 
followers aim to achieve a Nash equilibrium given a leader's action,  the leader seeks to reach a Stackelberg equilibrium.
While most papers assume that the follower is myopic (i.e., follows a best-response dynamic), ~\cite{haghtalab2022learning} deviates from this assumption and investigates the case of non-myopic followers. Closest to our work is perhaps~\cite{haghtalab2023unifying} where learning in multi-objective Stackelberg games is studied. To the best of our knowledge, our paper is the first to allow the use of manipulation in a multi-objective Stackelberg game, which can be used to learn the follower's preferences. As such, these existing works do not need to deal with an extra level of complexity in finding the optimal manipulation strategy as we do.

In a related model from economics and bandits literature, a repeated Principal-Agent model is an incentive design problem between the principal (leader) and the agent (follower), where the aim of the principal is to minimise cumulative regret via payment (utility) transfer, which can be seen as a manipulation of the payoff matrix in case of a single objective \cite{dogan2023repeated,ho2014adaptive,scheid2024incentivized,liu2025learning}. 

\section{Manipulation Policy}
With the assumption that the follower's action is fully rational, the leader can deterministically reduce the feasible region of weight space for the follower's utility function. Given the pairwise comparison data $\mathcal{D}$, let us define the feasible region by
$ FR(\bm{w};\mathcal{D}):=\{\bm{w}\in \mathcal{W};  \text{ s.t. }   \mathds{1}\left[u^F(\bm{x}; \bm{w} ) \geq u^F(\bm{y};\bm{w})\right]=z \  \text{for all} \ (\bm{x},\bm{y},z) \in \mathcal{D} \} $.

\subsection{Initialisation and Prediction of the Best-Response} \label{sec:init_pred_BR}
To meaningfully manipulate the payoff matrix, the leader has to know the BR for each of the leader's actions first. Naively, the leader can simply test out all of their actions to investigate all follower BRs. However, this approach is inefficient if the number of leader's actions is large, since the leader may receive many low-utility payoffs. Instead, the leader can use a prediction of the follower's BR by calculating, for each leader's action $l$ and possible response $f$, the probability of being the BR conditional on the feasible region, i.e., 
\begin{align}
    &\mathbb{P}\left[BR(l)=f | l,W\right]:=  \\ 
    & \int \prod_{r  \in \mathcal{A}^F\backslash\{f\}} \mathds{1}\left[  u^F(X^F(l,f); W) \geq u^F(X^F(l,r);W) \right] dW. \notag 
\end{align}
where $W$ is a random variable representing the follower's utility weight over the feasible region $FR(\bm{w};\mathcal{D})$.

Since we do not assume to have prior information on the follower's weight, we simply assume $W$ follows the uniform distribution over the feasible region (non-informative prior). We predict the BR by $\hat{BR}(l)=\underset{f \in \mathcal{A}^F}{\mathrm{argmax}} \ \mathbb{P}\left[BR(l)=f | l,W \right]$.  Given the predicted BR, the leader needs to decide the manipulated action and their offer. For action $l$, offering a payoff dominating the BR's payoff $X^F(l,BR(l))$ (i.e., being better in all objectives) provides no information about the follower's weight since the follower definitely accepts the offer. To shrink the boundary of the feasible weight region effectively, the leader has to spend an \textit{informative} cost.
\begin{definition}
    A cost $\bm{c}$ is \emph{informative} if it makes at least one objective of the manipulated payoff $X^F(l,f)+\bm{c}$ bigger than the BR's payoff $X^F(l,BR(l))$  and makes at least one objective smaller.
\end{definition}
However, since the goal is not to find the actual weight but to maximise cumulative utility, the manipulation policy should consider the expected utility as well as the benefit of further restricting the feasible weight region.

\subsection{Expected Utility (EU) and Long-term Expected Utility (longEU) Policies}
One possible manipulation policy may be based on the highest expected utility (EU), which is defined by 
\begin{align} \label{eq:EU}
    \text{EU}( l,f,\bm{c} \ | \ W) & = \underbrace{u^L \left(X^L(l,f)-\bm{c} \right)}_{\text{new utility if accepted}} \cdot \mathbb{P} \left[ \text{accept} | l,f,\bm{c} ,W \right] \\
    & + \underbrace{u^L \left(X^L(l,BR(l)) \right)}_{\text{utility given BR if failed}}\cdot (1-\mathbb{P} \left[ \text{accept} | l,f,\bm{c} ,W \right]), \notag
\end{align}

\noindent where $\mathbb{P} \left[ \text{accept} | l,f,\bm{c} ,W \right]$ refers to the probability that the follower accepts the offer  $\bm{c}$ and chooses action $f$ given a probability distribution of the random weight $W$.

To approximate the benefit of learning about the follower's utility function, we  assume that after this round, the leader will choose the current best cost so far to manipulate the follower, and extrapolate until the end of the time horizon. In other words, we need a cost that maximises the long-term expected utility at the current round $t_0$ up to round $T$, which is calculated by
\begin{align} \label{eq:longEU}
    & \text{longEU}( l,f,\bm{c} \ | \ t_0,T,W) =  \\
    & u^L \left(X^L(l,f)-\bm{c} \right)  \cdot (T-t_0+1) \cdot \mathbb{P} \left[ \text{accept} | l,f,\bm{c} ,W \right] \notag \\ 
    &+\left( \left(u^L (X^L(l,BR(l)\right) +  u^L \left(X^L(\hat{l},\hat{f})-\hat{\bm{c}} \right) \cdot (T-t_0) \right) \notag \\ 
    & \cdot (1-\mathbb{P} \left[ \text{accept} | l,f,\bm{c} ,W \right]), \notag
\end{align}
where $(\hat{l},\hat{f},\hat{\bm{c}})$ is the current best manipulation at round $t_0$. 
Intuitively, the manipulation choice of the longEU policy is less conservative than the EU policy, i.e., the longEU policy tends to explore a new cost instead of exploiting the current best manipulation $(\hat{l},\hat{f},\hat{\bm{c}})$, especially in early rounds, since the loss from the follower refusing the offer is small compared to the potential long-term utility if the follower accepts. In the very first step, before any BR is known, the current best manipulation in the longEU calculation cannot be determined and we set, for each leader's action $l$, $ u^L \left(X^L(\hat{l},\hat{f})-\hat{\bm{c}} \right) = u^L \left(X^L(l,\hat{BR}(l)\right)$.

\subsubsection{Probabilistic Feasible Region (PFR)}
To approximate $\mathbb{P} \left[ \text{accept} | l,f,\bm{c},W \right]$, we can use a probabilistic surrogate model for the follower's utility function. One can assign a uniform distribution to the feasible region and compute 
\begin{align} \label{eq:prob_accept_PFR}
    & \mathbb{P} \left[ \text{accept} | l,f,\bm{c} ,W \right] \\
    & := \int \mathds{1}\left[  u^F(X^F(l,f)+\bm{c}; W) \geq u^F(X^F(l,BR(l));W) \right] dW. \notag
\end{align}
A general framework of the longEU policy with PFR (see Algorithm \ref{alg:longEU_policy_PFR}) starts from initialising the feasible region model from the pairwise comparison data. For each round $t$, the policy finds the best manipulation $(l^t,f^t,c^t)$, i.e., the one with the highest longEU value, based on the feasible weight region and round information. The policy then observes whether the manipulation succeeds or not ($a^t==f^t$) and updates the result of manipulation to $\mathcal{D}^t$, the new feasible region $FR(\bm{w};\mathcal{D}^t)$ and the current best manipulation $(\hat{l},\hat{f},\hat{\bm{c}})$.

\begin{algorithm} 
\caption{longEU policy with PFR $^*$ }\label{alg:longEU_policy_PFR}
\begin{algorithmic}[1]
\Require $T>0$, $X^L, X^F$, $\bm{w}^L$, $\mathcal{C}$ 
    \State Set $t=0$, $\mathcal{D}^t=\emptyset$ and initiate $FR(\bm{w};\mathcal{D}^t)$ 
\For{$t=1$ to $T$}
    
    \State Pick $(l^t,f^t,c^t)= \underset{l,f,\bm{c}}{\mathrm{argmax}} \ \text{longEU}(l,f,\bm{c} | t,T, W)$ 
    \State Observe follower's action $a^t$ 
    \State Update $\mathcal{D}^t$, $FR(\bm{w};\mathcal{D}^t)$ and  $(\hat{l},\hat{f},\hat{\bm{c}})$
\EndFor

\end{algorithmic}
\scriptsize{$^*$longEU value can be replaced by EU value from \eqref{eq:EU}}
\end{algorithm}

\subsubsection{Random-weight Minimal Cost (RWMC)}
Integrating over the feasible weight space in Eq.~\ref{eq:prob_accept_PFR} may be computationally expensive. To reduce the computational cost, instead of integrating over the feasible weight space, one can pick a random weight $\hat{\bm{w}}_{r}$ from the feasible region $FR(\bm{w};\mathcal{D})$ and find the optimal manipulation cost $\hat{\bm{c}}_r(l,f)$ for each pair of actions $(l,f)$ from \eqref{eq:optimal_policy} with the substitution of the chosen weight $\hat{\bm{w}}_{r}$ into the follower's utility function $u^F$.  This will result in a manipulation that makes the manipulated action equally attractive to the BR action to the follower, given the weight $\hat{\bm{w}}_{r}$. We then simply assume that there is a 50\% probability that the follower accepts the offer
\begin{equation} \label{eq:prob_accept_RW}
\mathbb{P} \left[ \text{accept} | l,f,\hat{\bm{c}}_r(l,f), \hat{\bm{w}}_{r} \right]=0.5.
\end{equation}

Geometrically, a chosen manipulation defines a cutting plane in the feasible weight space that separates the weight vectors with which the follower will accept from those with which the follower would reject the manipulation. A natural choice for the cost seems to be the minimal cost $\hat{\bm{c}}_r(l,f)$, the smallest (in terms of leader's utility) cost needed to cut the feasible region given the randomly chosen weight. 

After the follower has rejected or accepted the offer, the feasible weight space is reduced by keeping only the corresponding side of the cutting plane. Ideally, if we could manage to pick a cutting plane that cuts the feasible weight space in half, it would be guaranteed to shrink logarithmically (divided by two in every round). In order to move the cutting plane more into the centre of the feasible weight space, we thus propose also a 
 variant of RWMC that calculates the centroid $\hat{\bm{w}}_{mid}$ from the feasible region as the chosen weight, and a corresponding approach is named Middle-Weight Minimal Cost (MWMC). In particular in 2-dimensional problems, the feasible region can be mapped to a 1-dimensional hyperplane due to $(w_1,w_2)=(w,1-w)$, and the interval of the first weight is $FR(w_1;\mathcal{D})=[w_{min},w_{max}]$. Hence, to effectively shrink the size of the feasible region, one can apply the idea of \textbf{Binary Search} and pick the middle point $W_{mid,1}=(w_{min}+w_{max})/2$. For $D>2$, the centroid is estimated as the average of a set of randomly generated samples produced using an MCMC sampling method \citep{Zabinsky2013}.
 
 Nevertheless, for RWMC and MWMC, we still need some criteria for choosing the manipulated action.
Algorithm \ref{alg:longEU_policy_RWMC} is modified from Algorithm \ref{alg:longEU_policy_PFR} and applies the highest longEU value as a criterion to select a manipulated action with a corresponding minimal cost (line 4-7). Additionally, the policy considers the tradeoff of exploring a new weight and exploiting the current best manipulation to not lose too much utility (line 8-13). Note that the MWMC approach has a chance to stop exploring a new weight if the longEU (or EU) value of the non-BR is smaller than that of the BR, or if the OMP is infeasible (not possible to manipulate successfully, especially for the (C2) constraint). Such situations do not help shrinking the feasible weight region. Thus, if the MWMC approach gets stuck with the same weight as in the previous round, we simply randomly choose a new weight in each round until we have chosen to manipulate again.
    
\begin{algorithm} 
\caption{longEU policy with RWMC (or MWMC) $^*$}\label{alg:longEU_policy_RWMC}    
\begin{algorithmic}[1]

\Require $T>0$, $X^L, X^F$, $\bm{w}^L$, $\mathcal{C}$
    \State Set $t=0$, $\mathcal{D}^t=\emptyset$ and initiate $FR(\bm{w};\mathcal{D}^t)$ 
\For{$t=1$ to $T$}
    \State Randomly generate weight $\hat{\bm{w}}_{r}$ from $FR(\bm{w};\mathcal{D}^t)$
    \Statex \Comment{if MWMC, pick $\hat{\bm{w}}_{r}=\hat{\bm{w}}_{mid}$ }
    \ForAll{ $(l,f) \in \mathcal{A}^L \times \mathcal{A}^F$ }
        \State Compute the minimal cost $\hat{\bm{c}}_{r}(l,f) \in \mathcal{C}$ from \eqref{eq:optimal_policy} 
        \State with the substitution of $l,f,\hat{\bm{w}}_{r}$
    \EndFor
     \State $(l_r,f_r)= \underset{l,f}{\mathrm{argmax}} \ \text{longEU}(l,f,\hat{\bm{c}}_{r}(l,f) | t,T, \hat{\bm{w}}_{r})$ 
     \State Compute $$\text{longEU}(\hat{l},\hat{f},\hat{\bm{c}} | t,T)= u^L \left(X^L(\hat{l},\hat{f})-\hat{\bm{c}} \right) \cdot (T-t+1) \label{eq:longEU_currbest}$$ 
     \If {$ \text{longEU}(l_r,f_r,\hat{\bm{c}}_{r}) | t,T, \hat{\bm{w}}_{r}) < \text{longEU}(\hat{l},\hat{f},\hat{\bm{c}} | t,T ) $ }
        \State Pick $(l^t,f^t,\bm{c}^t)=(\hat{l},\hat{f},\hat{\bm{c}})$
    \Else 
        \State Pick $(l^t,f^t,\bm{c}^t)=(l_r,f_r,\bm{c}_{r}(l_r,f_r))$
     \EndIf
    \State Observe follower's action $a^t$ 
    \State Update $\mathcal{D}^t$,$FR(\bm{w};\mathcal{D}^t)$ and  $(\hat{l},\hat{f},\hat{\bm{c}})$
\EndFor

\end{algorithmic}
\scriptsize{$^*$longEU value can be replaced by EU value from \eqref{eq:EU}. Then line \ref{eq:longEU_currbest} becomes $\text{EU}(\hat{l},\hat{f},\hat{\bm{c}})= u^L \left(X^L(\hat{l},\hat{f})-\hat{\bm{c}} \right) $}
\end{algorithm}

\subsection{Asymptotic Analysis of the longEU policy}
In every round, the longEU policy aims to improve its utility over the current best found manipulation regardless of the form of the parametric utility function or cost constraint. 
\begin{lemma} \label{thm:better_cost_longEU}
    If $BR(l)$ are known for all $l$, the maximiser $(l,f,\bm{c})$ of \eqref{eq:longEU} satisfies $u^L(X^L(l,f)-\bm{c})\geq u^L(X^L(\hat{l},\hat{f})-\bm{\hat{c}})$.
\end{lemma}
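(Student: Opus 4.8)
The plan is to exhibit one concrete feasible choice whose longEU value is trivial to compute, namely the incumbent manipulation $(\hat{l},\hat{f},\hat{\bm{c}})$ itself, and then use maximality to push the resulting bound back onto $u^L(X^L(l,f)-\bm{c})$. First I would observe that the incumbent is, by construction, a manipulation the leader has already seen succeed (or is a zero-cost ``no manipulation'', which the follower accepts automatically); in either case the current feasible region $FR(\bm{w};\mathcal{D})$ contains only weights consistent with that acceptance, so $\mathbb{P}[\text{accept}\mid\hat{l},\hat{f},\hat{\bm{c}},W]=1$ and therefore $\mathrm{longEU}(\hat{l},\hat{f},\hat{\bm{c}}\mid t_0,T,W)=u^L(X^L(\hat{l},\hat{f})-\hat{\bm{c}})\cdot(T-t_0+1)$, which is exactly the quantity used for the incumbent in Algorithm~\ref{alg:longEU_policy_RWMC}. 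Since $(l,f,\bm{c})$ maximises \eqref{eq:longEU}, its longEU value is at least this amount.

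Writing $p:=\mathbb{P}[\text{accept}\mid l,f,\bm{c},W]\in[0,1]$, $m:=T-t_0\ge 0$, $U:=u^L(X^L(l,f)-\bm{c})$, $U_{BR}:=u^L(X^L(l,BR(l)))$ and $\hat{U}:=u^L(X^L(\hat{l},\hat{f})-\hat{\bm{c}})$, the inequality from the previous step reads $(m+1)pU+(1-p)(U_{BR}+m\hat{U})\ge(m+1)\hat{U}$. The second ingredient is the bound $U_{BR}\le\hat{U}$: once every $BR(l)$ is known, $(l,BR(l),\bm{0})$ is a valid, always-accepted manipulation for each $l$, and the incumbent is only ever updated upwards starting from the best such ``no manipulation'' value, so $\hat{U}\ge\max_{l'}u^L(X^L(l',BR(l')))\ge U_{BR}$. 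Substituting $U_{BR}\le\hat{U}$ into the displayed inequality and dividing by $(m+1)>0$ collapses it to $p(U-\hat{U})\ge 0$; when $p>0$ this is precisely $U\ge\hat{U}$, which is the claim.

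It remains to treat $p=0$. Plugging $p=0$ into the displayed inequality forces $U_{BR}+m\hat{U}\ge(m+1)\hat{U}$, i.e.\ $U_{BR}\ge\hat{U}$, and combined with $U_{BR}\le\hat{U}$ this yields $U_{BR}=\hat{U}$; here the offer is certain to be rejected and attains exactly the incumbent's longEU value, so the maximiser may be taken to be $(\hat{l},\hat{f},\hat{\bm{c}})$, for which the claim holds with equality. The main obstacle I anticipate is not the algebra but justifying the two structural facts about the incumbent $(\hat{l},\hat{f},\hat{\bm{c}})$ — that it is accepted with probability one under the current feasible region, and that $\hat{U}$ dominates every single-action best-response utility — both of which follow from how the algorithm maintains $(\hat{l},\hat{f},\hat{\bm{c}})$ and $FR(\bm{w};\mathcal{D})$; once these are in place, the rest is the two-line computation above.
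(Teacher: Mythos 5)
Your proof is correct and takes essentially the same route as the paper: both compare the longEU of a candidate against the incumbent's value $(T-t_0+1)\,u^L(X^L(\hat{l},\hat{f})-\hat{\bm{c}})$ (using that the incumbent is accepted with probability one) and invoke $u^L(X^L(l,BR(l)))\le u^L(X^L(\hat{l},\hat{f})-\hat{\bm{c}})$ to conclude that any maximiser must match or exceed the incumbent's utility. Your explicit handling of the $p=0$ tie case is in fact slightly more careful than the paper's assertion that the longEU difference is negative whenever the candidate's utility falls below the incumbent's, which silently skips the degenerate case $p=0$ with $u^L(X^L(l,BR(l)))=u^L(X^L(\hat{l},\hat{f})-\hat{\bm{c}})$.
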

\begin{proof}
For convenience, we write the utility of the current best manipulation at time $t$, $(\hat{l},\hat{f},\hat{\bm{c}})$, as $u_{best}^t$ and denote $p:=\mathbb{P} \left[ \text{accept} | l,f,\bm{c} ,W \right]$.  Since $\mathbb{P} \left[ \text{accept} | \hat{l},\hat{f},\hat{\bm{c}} ,\bm{w} \right] =1 $, we can write  longEU also as 
\begin{align} \label{eq:aug_longEU}
    &\text{longEU}( l,f,\bm{c} \ | \ t,T,W) - \text{longEU}( \hat{l},\hat{f},\hat{\bm{c}} \ | \ t,T,W) \notag \\ 
    & = (T-t) p \left( u^L (X(l,f)-\bm{c}) - u_{best}^t  \right) \\
    &+ \left(u^L (X(l,f)-\bm{c}) p + u^L (X(l,BR(l))) (1 -p)\right) - u_{best}^t. \notag 
\end{align}
This term is negative for $u^L(X^L(l,f)-\bm{c})< u^t_{best}$, since $u^t_{best}\geq u^L(X(l,BR(l)))$ for all $l,t$.
For $(l,f,c)=(\hat{l},\hat{f},\hat{c})$ the above term is equal to zero, so zero can always be achieved and thus longEU will either propose the current best manipulation, or one with a larger utility, i.e., $u^L(X^L(l,f)-\bm{c})\geq u^t_{best}$. 
\end{proof}

When $T \rightarrow \infty$ and $p>0$, the calculation of \eqref{eq:aug_longEU} is dominated by the first term. The asymptotic form of the longEU policy, therefore, becomes 
\begin{align} \label{eq:asymp_longEU}
    & \text{longEU}^+( l,f,\bm{c} \ | W) := \\
    & \left({u^L \left(X^L(l,f)-\bm{c} \right)}- u^L \left(X^L(\hat{l},\hat{f})-\hat{\bm{c}} \right)\right) \cdot \mathbb{P} \left[ \text{accept} | l,f,\bm{c} ,W \right]. \notag
\end{align}
This longEU$^+$ form resembles the Expected Improvement (EI) algorithm \citep{jones1998efficient} in the sense that the new expected utility is compared to the current best solution. The following two theorems show that both algorithms converge to the optimal manipulation cost. 

 \begin{theorem} \label{thm:alg1_converge}
    Under constraint (C1), linear utility, and $D=2$, assume that  all BRs are known and $w^L_d,w^F_d>0$ for all $d \in [D]$. If $T \rightarrow \infty$, the manipulation cost from Algorithm \ref{alg:longEU_policy_PFR} will converge to the optimal manipulation cost of \eqref{eq:optimal_policy} for any beneficial games.
\end{theorem}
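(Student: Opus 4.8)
The plan is to argue by contradiction, using the asymptotic form $\text{longEU}^{+}$ from \eqref{eq:asymp_longEU} together with Lemmas~\ref{thm:optimal_cost_C1} and~\ref{thm:better_cost_longEU}. First I set up the one‑dimensional geometry: with $D=2$ and linear utility the follower's weight is a scalar $w:=w^{F}_{1}\in(0,1)$, and the feasible region maintained by Algorithm~\ref{alg:longEU_policy_PFR} is a closed interval $I_{t}=[w^{-}_{t},w^{+}_{t}]$ that always contains $w^{F}$ (the follower is fully rational, so every observation is consistent with $w^{F}$) and is nonincreasing in $t$, hence $I_{t}\downarrow I_{\infty}\ni w^{F}$. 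An informative manipulation $(l,f,\bm c)$ induces, through (IC), a single cutting point $w^{\star}(l,f,\bm c)\in(0,1)$: the follower accepts iff $w$ lies on one fixed side of $w^{\star}$, and under the uniform PFR surrogate $\mathbb P[\text{accept}\mid l,f,\bm c,W]$ equals the fraction of $I_{t}$ on that side. By Lemma~\ref{thm:optimal_cost_C1}, for a beneficial game the optimal cost $\bm c^{*}$ is supported on one coordinate $d^{*}$ and equals the \emph{minimal} cost making the follower indifferent at $w^{F}$; hence $(l^{*},f^{*},\bm c^{*})$ is informative with cutting point exactly $w^{F}$.

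Next I record two monotonicity facts. By Lemma~\ref{thm:better_cost_longEU} (and its analogue for the limiting objective \eqref{eq:asymp_longEU}) the manipulation played at round $t$ has leader utility at least $u^{t}_{\mathrm{best}}:=u^{L}(X^{L}(\hat l,\hat f)-\hat{\bm c})$, and whenever the follower accepts, the played triple is feasible for \eqref{eq:optimal_policy}, so its utility is at most the optimum value $u^{*}:=u^{L}(X^{L}(l^{*},f^{*})-\bm c^{*})$; thus $u^{t}_{\mathrm{best}}$ is nondecreasing, bounded by $u^{*}$, and converges to some $u^{\infty}_{\mathrm{best}}\le u^{*}$. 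Assume $\delta:=u^{*}-u^{\infty}_{\mathrm{best}}>0$. The quantitative heart is a uniform positive lower bound on $\text{longEU}^{+}$ of the played move: take $\bm c_{\epsilon}:=\bm c^{*}+\epsilon\,\bm e_{d^{*}}$; its accept region extends past $w^{F}$ by a fixed amount $\rho(\epsilon)>0$, and since $w^{F}\in I_{t}$ always, the accept side captures at least a $\rho(\epsilon)/|I_{0}|$‑fraction of $I_{t}$, so $\mathbb P[\text{accept}\mid l^{*},f^{*},\bm c_{\epsilon},W]\ge p_{0}>0$ for every $t$; choosing $\epsilon<\delta/(2w^{L}_{d^{*}})$ gives $\text{longEU}^{+}(l^{*},f^{*},\bm c_{\epsilon}\mid W)\ge(\delta/2)\,p_{0}=:\gamma>0$ for all $t$, so the argmax played at round $t$ has $\text{longEU}^{+}\ge\gamma$. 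Two consequences follow: (i) each accepted move raises $u^{t}_{\mathrm{best}}$ by at least $\gamma/\mathbb P[\text{accept}]\ge\gamma$, so (being bounded by $u^{*}$) acceptances occur only finitely often, after which the follower always rejects and every round strictly shrinks $I_{t}$; (ii) writing $\text{longEU}^{+}=(\text{improvement})\cdot p$ with improvement $\le u^{L}_{\max}-u^{0}_{\mathrm{best}}=:B<\infty$ gives $p_{t}\ge\gamma/B>0$, and since a rejection deletes a $p_{t}$‑fraction of $I_{t}$, $|I_{t}|\to0$ geometrically, so $I_{\infty}=\{w^{F}\}$.

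Finally I close the argument. Once $I_{t}$ is a tiny interval around $w^{F}$, the ``safe'' manipulation $(l^{*},f^{*},\bm c^{\mathrm{safe}}_{t})$ --- the least‑leader‑utility cost on coordinate $d^{*}$ accepted by \emph{all} of $I_{t}$ --- satisfies $\bm c^{\mathrm{safe}}_{t}\downarrow\bm c^{*}$ by continuity of the cutting point in the cost and of the endpoints of $I_{t}$; hence its leader utility rises to $u^{*}$ and $\text{longEU}^{+}(l^{*},f^{*},\bm c^{\mathrm{safe}}_{t}\mid W)\to\delta$. But this move is accepted by $w^{F}$, so playing it would push $u^{t}_{\mathrm{best}}$ above $u^{\infty}_{\mathrm{best}}$, which is impossible; therefore for large $t$ the played move is a rejecting (``under‑paying'') one, yet still has $\text{longEU}^{+}\ge\text{longEU}^{+}(\mathrm{safe})\to\delta$. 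Picking an action pair and coordinate used infinitely often, the cutting points of the played moves tend to $w^{F}$, so by monotonicity/continuity of the cutting point in the cost the played costs converge --- necessarily to the minimal cost making that pair indifferent at $w^{F}$ --- and the bound $\liminf u^{L}(X^{L}(l^{t},f^{t})-\bm c^{t})\ge u^{*}$ forces that minimal‑cost manipulation to be optimal, i.e.\ $\bm c^{t}\to\bm c^{*}$; this contradicts $\delta>0$ and simultaneously establishes the claimed convergence of the manipulation cost.

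The step I expect to be hardest is making consequence~(ii) and the last paragraph fully rigorous: ruling out that the algorithm forever plays slightly‑under‑paying rejecting manipulations while $w^{F}$ drifts toward an endpoint of the shrinking interval. This requires tracking the relative position $q_{t}:=(w^{F}-w^{-}_{t})/|I_{t}|$ and using the uniform bound $p_{t}\ge\gamma/B$, together with a case split on the orientation of each cut, to show $q_{t}$ cannot escape to $0$ or $1$, so that the fraction of $I_{t}$ on the accept side of $w^{F}$ --- which caps the $\text{longEU}^{+}$ of any rejecting move on the optimal pair --- stays bounded away from $1$ and the safe manipulation eventually becomes the genuine maximiser. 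One must also check that the argmax is attained (continuity of $\text{longEU}^{+}$ in $\bm c$ on the relevant compact cost range) and handle the tie‑break ``$\ge$'' in (IC) so that ``accepted by all of $I_{t}$'' indeed implies acceptance by the true follower.
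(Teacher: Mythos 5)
Your setup (interval feasible region containing $w^F$, PFR acceptance as the fraction of the interval on the accept side, the uniform lower bound $\gamma$ on the played longEU$^+$ via the perturbed cost $\bm{c}^*+\epsilon\bm{e}_{d^*}$, finitely many acceptances, geometric shrinkage of $I_t$) is sound, and it is a genuinely different route from the paper, which instead computes the maximiser of longEU$^+$ in closed form (a geometric-mean-type point $\tilde c_d=-\alpha+\sqrt{(\alpha+c_d^{min,t-1})(\alpha+c_d^{max,t-1})}$ between the lower cost bound from the feasible region and the cost of the current best utility) and then uses an AM--GM/relative-gap monotonicity argument to show $c_d^{max,t}-c_d^{min,t}\to 0$ with $c^*$ sandwiched. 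However, your argument has a genuine gap exactly at the step you flag as hardest, and it is not a technicality: under your hypothesis $\delta>0$ you only establish that, after the last acceptance, every played move is rejected, $|I_t|\to 0$, and the played costs (along a fixed pair/coordinate) converge to the indifference cost at $w^F$. This is \emph{not} a contradiction with $\delta>0$: during an all-rejection phase only the upper (in threshold terms) endpoint of $I_t$ moves, so $|I_t|\to 0$ forces $w^F$ to sit at the fixed endpoint, and a scenario in which the argmax is forever a slightly under-paying, rejected cost creeping up to the indifference cost is consistent with everything you have proved (your bounds $p_t\ge\gamma/B$ and improvement $\le B$ do not exclude it; note they even permit $p_t\to 1$). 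Ruling it out requires locating the longEU$^+$ maximiser inside the cost interval --- e.g.\ showing it lies a fraction bounded away from the feasibility boundary $c_d^{min,t}$, a constant step toward the (fixed, strictly larger) current-best cost, so that proposals cannot accumulate strictly below the indifference cost and must eventually cross it and be accepted. That quantitative localisation is precisely what the paper's explicit stationary-point computation and relative-gap function $g(c^{min},c^{max})$ provide, and your proposal only promises it (tracking $q_t$) without carrying it out.

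A second, smaller but real problem is logical structure: even if the contradiction were closed, you would have shown only $u^t_{\mathrm{best}}\to u^*$; the theorem asserts convergence of the \emph{manipulation cost} to $\bm{c}^*$, and your attempt to extract cost convergence inside the branch where $\delta>0$ is assumed (``this contradicts $\delta>0$ and simultaneously establishes the claimed convergence'') cannot stand, since that branch is being refuted. After establishing $\delta=0$ you would still need an argument in the spirit of the paper's sandwich $c_{d^*}^{min,t}(l^*,f^*)<c^*<c_{d^*}^{max,t}(l^*,f^*)$ with both bounds converging, to conclude that the played costs themselves converge to an optimal cost rather than merely having utilities approaching $u^*$ along accepted rounds. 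Your Lemma~\ref{thm:better_cost_longEU}-based monotonicity of $u^t_{\mathrm{best}}$ and the feasibility-at-$w^F$ observation are the right ingredients, but as written the proof does not close.
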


\begin{theorem} \label{thm:alg2_converge}
    Under constraint (C1) and linear utility, assume that all BRs are known and $w^L_d,w^F_d>0$ for all $d \in [D]$. If $T \rightarrow \infty$, the manipulation cost from Algorithm \ref{alg:longEU_policy_RWMC} will converge to the optimal manipulation cost of \eqref{eq:optimal_policy} for any beneficial game.
\end{theorem}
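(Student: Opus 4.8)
The plan is to follow the skeleton that must underlie Theorem~\ref{thm:alg1_converge}, but with the exact feasible-region integral of Algorithm~\ref{alg:longEU_policy_PFR} replaced by the RWMC/MWMC surrogate of Algorithm~\ref{alg:longEU_policy_RWMC}. Write $v^\star$ for the optimum of \eqref{eq:optimal_policy} at the true weight $\bm{w}^F$, and for a weight $\bm{w}$ with all $w_d>0$ let $v^\star(\bm{w})$ be the optimum of \eqref{eq:optimal_policy} when $u^F(\cdot)$ is replaced by $\bm{w}\cdot(\cdot)$. By Lemma~\ref{thm:optimal_cost_C1} the inner problem has the closed form $v^\star(\bm{w})=\max_{l,f}\bigl(u^L(X^L(l,f))-(\bm{w}\cdot\Delta_{l,f})\min_d w^L_d/w_d\bigr)$ with $\Delta_{l,f}:=X^F(l,BR(l))-X^F(l,f)$, so $v^\star(\cdot)$ is continuous at $\bm{w}^F$ precisely because $w^F_d>0$ for every $d$. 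First I would record the monotonicity already contained in Lemma~\ref{thm:better_cost_longEU}: the played manipulation always has longEU at least that of the current best $(\hat l,\hat f,\hat{\bm{c}})$, so the realised best utility $u_{best}^t$ is non-decreasing; bounded above by $v^\star$, it converges to some $u^\infty\le v^\star$, and it suffices to prove $u^\infty=v^\star$ together with convergence of the offered cost to the \eqref{eq:optimal_policy}-optimiser $\bm{c}^\star$.

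The argument then splits on a dichotomy for the weight samples. \emph{Case~A: exploration eventually stops}, i.e.\ from some round $t_0$ on the policy always replays $(\hat l,\hat f,\hat{\bm{c}})$. For RWMC this means the i.i.d.\ uniform samples $\hat{\bm{w}}_r$ from the now-frozen region $FR(\bm{w};\mathcal{D}^{t_0})$ never pass the exploration test; for MWMC the ``re-randomise when stuck'' rule makes it behave identically. Since $\bm{w}^F\in FR(\bm{w};\mathcal{D}^{t_0})$, almost surely a subsequence of samples converges to $\bm{w}^F$, whence $v^\star(\hat{\bm{w}}_r)\to v^\star$ by continuity; as the $T\to\infty$ form of the exploration test (via \eqref{eq:asymp_longEU}, with $\mathbb{P}[\text{accept}]=1/2$) reduces exactly to ``$v^\star(\hat{\bm{w}}_r)>u_{best}^{t_0}$'', never exploring forces $v^\star\le u_{best}^{t_0}=u^\infty$, hence $u^\infty=v^\star$ and the replayed $(\hat l,\hat f,\hat{\bm{c}})$ is already an optimiser of \eqref{eq:optimal_policy}.

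\emph{Case~B: exploration happens infinitely often.} Each exploratory round contributes one pairwise-comparison constraint: the follower accepts the minimal cost $\hat{\bm{c}}_r$ offered for $(l_r,f_r)$ iff $\bm{w}^F\cdot(\hat{\bm{c}}_r-\Delta_{l_r,f_r})\ge 0$, and since $\hat{\bm{c}}_r$ was chosen to equalise the follower's utilities at $\hat{\bm{w}}_r$, the cutting hyperplane $\{\bm{w}:\bm{w}\cdot(\hat{\bm{c}}_r-\Delta_{l_r,f_r})=0\}$ passes through $\hat{\bm{w}}_r$. Because $\hat{\bm{w}}_r$ is almost surely interior to $FR(\bm{w};\mathcal{D}^t)$ (respectively is its centroid for MWMC) this is a proper cut; exploiting that $\hat{\bm{w}}_r$ is uniform (respectively central), a Gr\"unbaum-type estimate gives $\mathbb{E}[\mathrm{vol}(FR(\bm{w};\mathcal{D}^{t+1}))]\le\alpha\,\mathrm{vol}(FR(\bm{w};\mathcal{D}^{t}))$ with $\alpha<1$ on exploration rounds, so $\mathrm{vol}(FR)\to 0$; a ``needle'' refinement — the cut normals $\hat{\bm{c}}_r-\Delta_{l,f}$ sweep a one-parameter family as $\hat{\bm{w}}_r$ varies inside the cell on which $d^\star=\mathrm{argmin}_d w^L_d/w_d$ and the optimal pair are locally constant — upgrades this to $\mathrm{diam}(FR)\to 0$, i.e.\ $FR(\bm{w};\mathcal{D}^t)\to\{\bm{w}^F\}$. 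Then $\hat{\bm{w}}_r\to\bm{w}^F$, the minimal-cost map is continuous there, so the exploratory manipulation converges to the \eqref{eq:optimal_policy}-optimiser; and acceptances must occur infinitely often, since if every exploration were rejected past some round then $u_{best}$ would freeze while the acceptance half-space at $\bm{w}^F$ still meets the shrinking $FR$ in a non-vanishing fraction, contradicting Borel--Cantelli for RWMC (and the ``the bisection midpoint cannot stay on one side of $\bm{w}^F$ forever unless $\bm{w}^F$ is an endpoint'' observation for MWMC). Hence $u_{best}$ keeps jumping to $v^\star(\hat{\bm{w}}_r)\to v^\star$, giving $u^\infty=v^\star$ and forcing the offered costs to converge to $\bm{c}^\star$.

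I expect the real obstacle to be all of Case~B: showing that infinitely many cuts drive not merely the volume but the \emph{diameter} of the feasible region to zero, and — closely tied to it — ruling out the ``knife-edge'' pathology in which the policy keeps offering the exactly-minimal cost and is rejected by an infinitesimal margin round after round. Both are where the uniform/centroid property of the sampled weight does genuine work, and both need a little bookkeeping — cells on which $d^\star$ and the optimal pair $(l^\star,f^\star)$ are locally constant, together with a genericity reading of the hypothesis $w^F_d>0$ (so that $\bm{w}^F$ is interior to its cell and the OMP optimiser is unique) — to make watertight. The Expected-Improvement analysis of \citep{jones1998efficient}, already flagged through \eqref{eq:asymp_longEU}, is the natural template for organising this last step.
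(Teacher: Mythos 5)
Your Case~A is, in substance, the paper's own key observation: the sampled weight is uniform on a feasible region that always contains $\bm{w}^F$, so whenever the game is beneficial and $u^t_{best}$ is still suboptimal, an improving exploration is proposed with non-zero probability, and exploration cannot stop prematurely. The genuine gap is Case~B. You route it through ``$\mathrm{vol}(FR)\to 0$, hence (needle refinement) $\mathrm{diam}(FR)\to 0$, hence $FR\to\{\bm{w}^F\}$, hence continuity of the minimal-cost map''. The middle step is not only left open (you flag it yourself as the real obstacle) — it is false in general, and the theorem does not need it. Every cut generated by Algorithm~\ref{alg:longEU_policy_RWMC} is a level set of some $f_{(l,f,d)}(\bm{w})=\bm{w}\cdot\bm{c}^o(l,f)/w_d$, i.e.\ its normal has the form $c_d e_d-\Delta_{l,f}$ with $\Delta_{l,f}:=X^F(l,BR(l))-X^F(l,f)$. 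For $D\ge 4$ these normals can stay forever inside a low-dimensional family (e.g.\ one pair $(l,f)$ and one objective $d$ chosen in every exploration round gives normals in $\mathrm{span}\{e_d,\Delta_{l,f}\}$), so directions of the weight simplex orthogonal to that span are never constrained: the feasible region keeps positive width there and never shrinks to $\{\bm{w}^F\}$, even after infinitely many informative cuts. Hence the step ``$\hat{\bm{w}}_r\to\bm{w}^F$, then invoke continuity'' cannot be secured along this route (and the Gr\"unbaum-style volume bound also needs care, since the cut passes through a uniformly random point rather than the centroid and its direction is data-dependent — though volume decay alone would not suffice anyway).

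The paper never identifies $\bm{w}^F$; it works in cost space. By Lemma~\ref{thm:optimal_cost_C1} the optimum of \eqref{eq:optimal_policy} is a single-objective cost, so for a beneficial game there exist $(l^*,f^*,d^*)$ with $c_{d^*}^{min,t}(l^*,f^*)<c^*_{d^*}<c_{d^*}^{max,t}(l^*,f^*)$, where $c^{min,t}$ is the lower bound implied by the current feasible region and $c^{max,t}$ is the cost attaining the current best utility; by Lemma~\ref{thm:better_cost_longEU} the upper bracket is non-increasing and the lower one non-decreasing. Since $f_{(l^*,f^*,d^*)}$ maps the feasible region onto this bracket, the uniform sample proposes, with non-zero probability, a cost strictly inside it; each such informative round shrinks the bracket, and the paper concludes $c_{d^*}^{max,t}-c_{d^*}^{min,t}\to 0$, i.e.\ convergence of the offered cost to the optimal one, while $FR$ may remain far from a singleton. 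To repair your Case~B, replace ``$\mathrm{diam}(FR)\to 0$'' by tracking exactly this one-dimensional range of $f_{(l^*,f^*,d^*)}$ over $FR$ — that is the quantity the cuts actually control, and your continuity/Borel--Cantelli bookkeeping then attaches to it rather than to the geometry of the whole region.
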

\noindent
Proofs can be found in the Appendix \ref{sec:appendix_proof}.

\section{Experiments}
In this section, we investigate the average performance of all policies over a large number of randomly generated games, and then have a closer look at their behaviour at some specifically chosen games. 

We compare the following policies:
\begin{itemize}
    \item no manipulation: Choose the best action without manipulation  given the known BR
    \item longEU + PFR : Algorithm \ref{alg:longEU_policy_PFR}
    \item longEU + RWMC : Algorithm \ref{alg:longEU_policy_RWMC} with weight randomisation
    \item longEU + MWMC : Algorithm \ref{alg:longEU_policy_RWMC} with centroid weight 
    \item EU + PFR : Algorithm \ref{alg:longEU_policy_PFR}  replacing longEU by EU calculation.
    \item EU + MWMC : Algorithm \ref{alg:longEU_policy_RWMC}  with centroid weight and replacing longEU by EU calculation.
    \item longEU$^+$+ MWMC : Algorithm \ref{alg:longEU_policy_RWMC} with centroid weight and replacing longEU by longEU$^+$ calculation.
\end{itemize}

Generally, the longEU policy starts with an exploration phase to learn the follower's weight. Once a new manipulation is deemed not worth exploring, the policy switches to exploiting the current best manipulation. The EU policy is more myopic, so it is unlikely to do as much exploration as the longEU policy. 

\subsection{Uniformly Random Game}
In this experiment, the payoff matrices as well as the utility weights are uniformly randomised. The payoff value for each objective is bounded between 0 and 1. We found that only approximately 30-40$\%$ of 2-objective (2D) and 3-objective (3D) random games are beneficial games. The average cumulative regret $\pm$ standard error over replications for different maximum numbers of rounds  $T$ are reported (note that longEU uses knowledge about $T$ and thus we here report only on the final cumulative regret given $T$). For $D>2$, MCMC sampling \cite{Zabinsky2013} is used to approximate the probability of accepting a cost.

The results show only very small differences between using constraints (C1) and (C2); thus, we only show the results with constraint (C2) here. We also run experiments on high-dimensional games ($D=10$). However, the trend of cumulative regret from all policies is generally similar to that of 2D cases. Due to space limit, we defer the results with constraint (C1) as well as results on high-dimensional games ($D=10$) can be found in the Appendix \ref{sec:appendix_results}.     

Figures~\ref{fig:2D-uniform-game:cumregret} and \ref{fig:3D-uniform-game:cumregret}  show that the best policy is longEU with the PFR model independent of the length of the game, even when $T$ is small. A closer look reveals that in some non-beneficial games, with the BR predicting model, the EU+PFR policy never corrects the wrong prediction of BR and keeps spending a manipulation cost on the actual BR. This scenario produces an unnecessary cost to the EU+PFR policy. Meanwhile, the longEU+PFR policy tries to improve the current best cost and eventually does not pay any cost to manipulate the actual BR. 

Considering the performance of Algorithm~\ref{alg:longEU_policy_RWMC}, as expected, choosing a middle weight (MWMC) provides a smaller cumulative regret than randomising (RWMC) the weight from the feasible region. Similar to the PFR model, a smaller cumulative regret of longEU over EU  is also observed from MWMC approaches even when the maximum time horizon $T$ is small. Besides, a larger number of available pairs of actions increases the cumulative regret of the EU policy more than of the longEU policy, especially in case of the MWMC approach, since a myopic perspective does not explore the follower's preference sufficiently, therefore having a higher chance of manipulating a suboptimal action.

\begin{figure}[h]

     \begin{subfigure}{0.5\textwidth}
         \centering
         \includegraphics[width=0.7\textwidth]{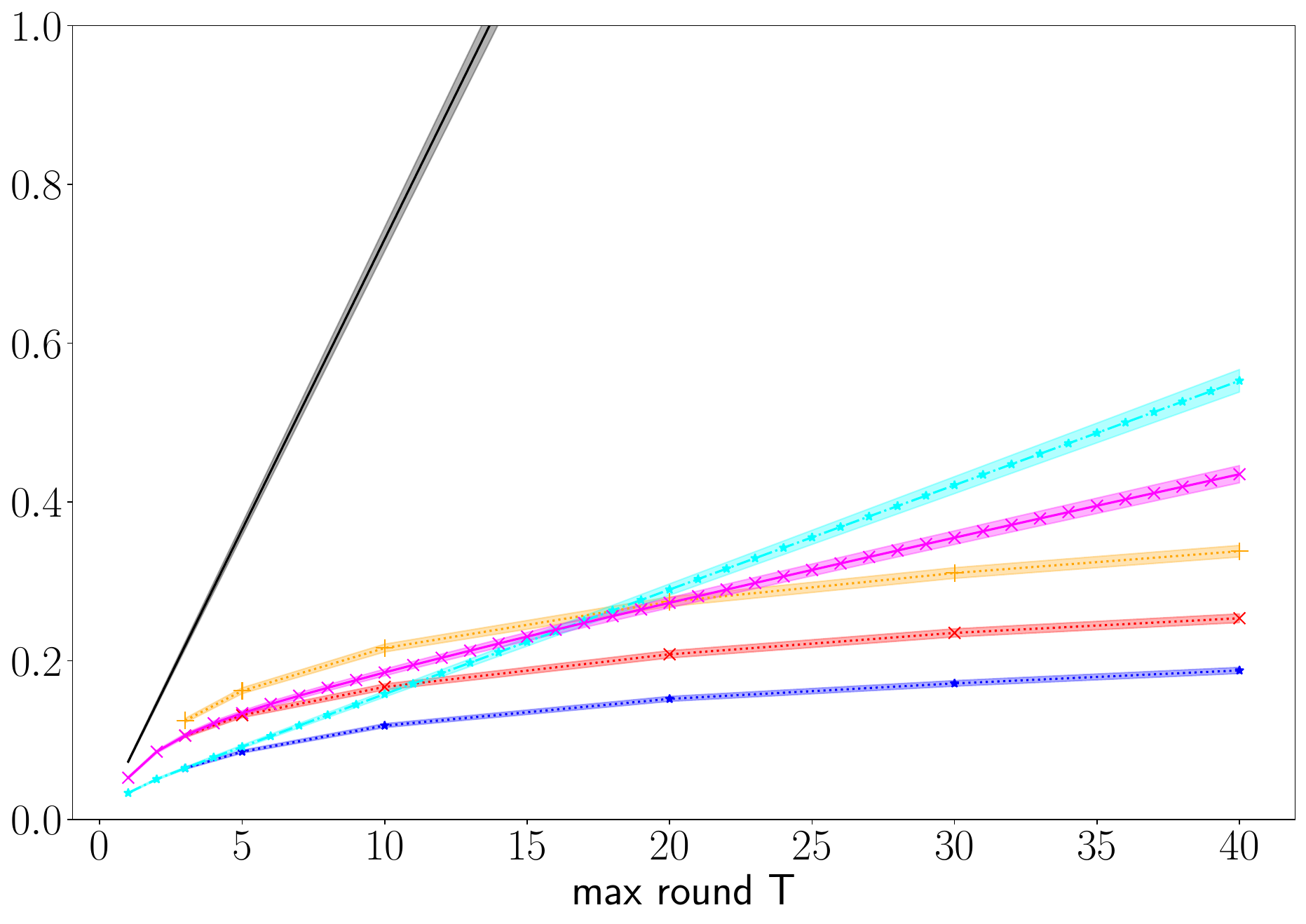}
         \caption{1x2 actions, constraint (C2) }
         \vspace{1em}
         \label{fig:2D-uniform-game:cumregret-1x2action-C2}
     \end{subfigure} 
  
     \begin{subfigure}{0.5\textwidth}
         \centering
         \includegraphics[width=0.7\textwidth]{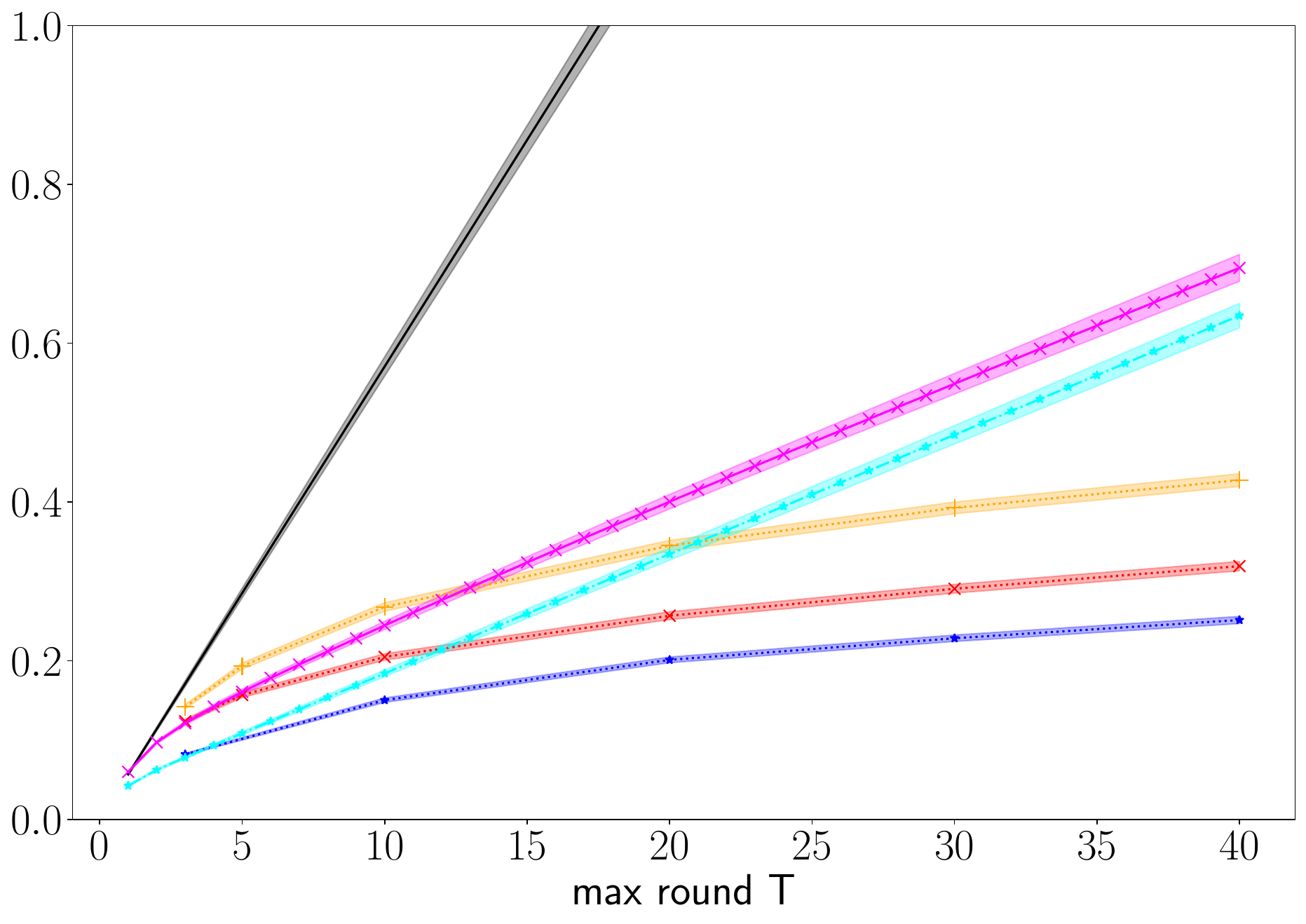}
         \caption{2x2 actions, constraint (C2) }
         \label{fig:2D-uniform-game:cumregret-2x2action-C2}
     \end{subfigure}
     \hfill
    \begin{subfigure}{0.5\textwidth}
         \centering
         \vspace{1em}
         \includegraphics[width=0.9\textwidth]{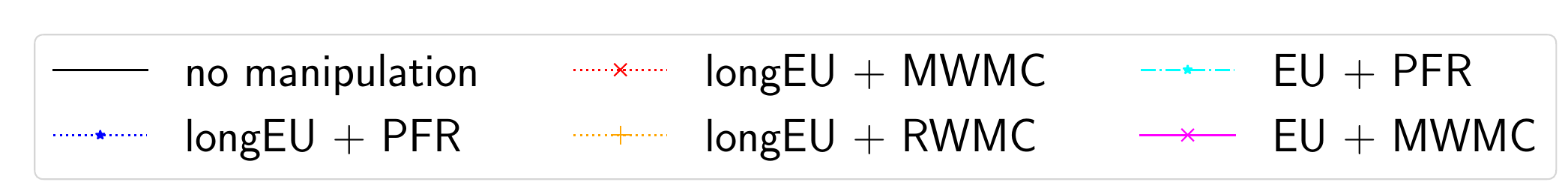}
     \end{subfigure}
    \caption{Final cumulative regret for 2D uniformly random games from 10000 replications}
    \label{fig:2D-uniform-game:cumregret}
    \vspace{1em}
\end{figure}

\begin{figure}[h]
    
     \begin{subfigure}{0.5\textwidth}
         \centering
         \includegraphics[width=0.7\textwidth]{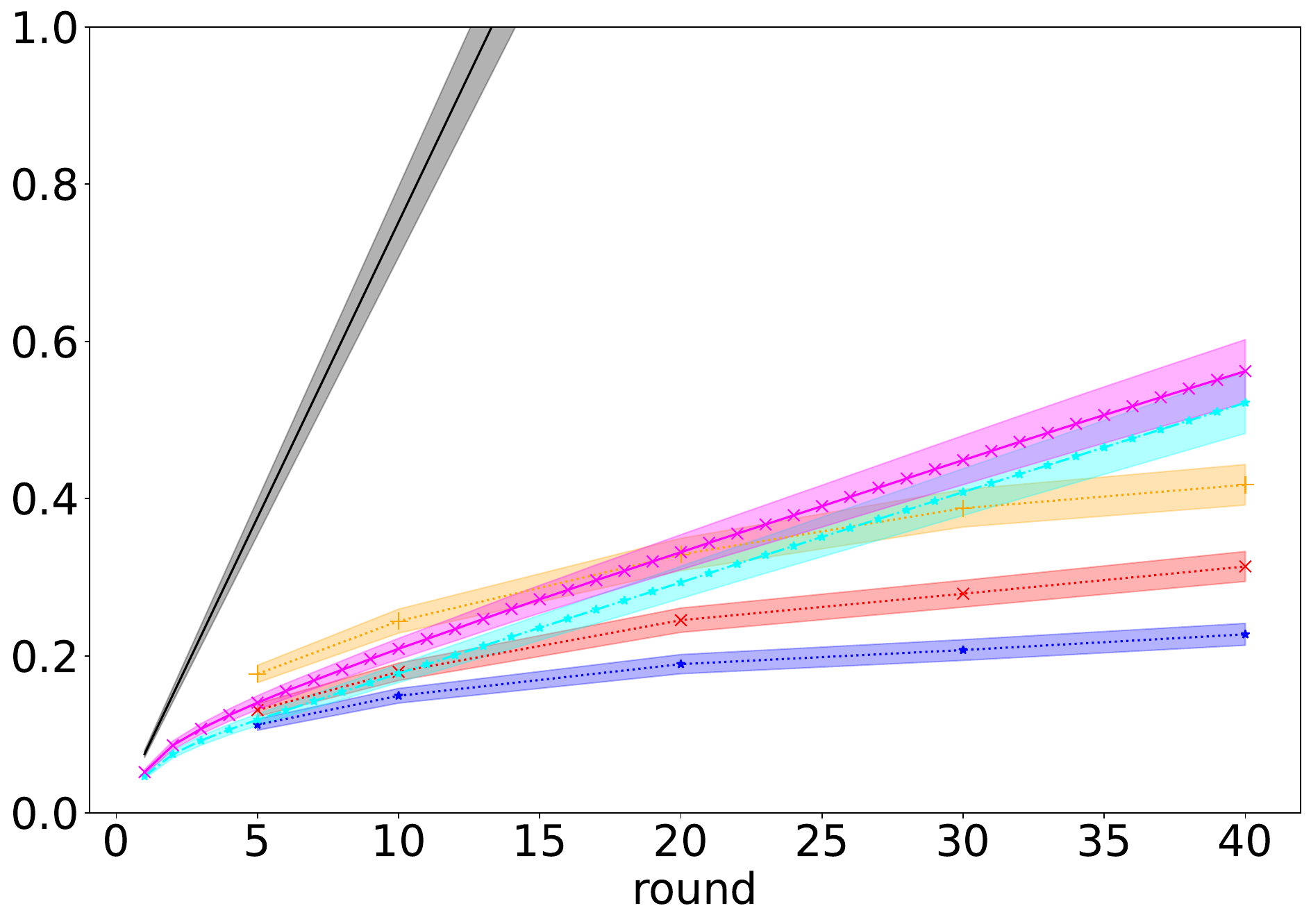}
         \caption{1x2 actions, constraint (C2) }
         \vspace{1em}
         \label{fig:3D-uniform-game:cumregret-1x2action-C2}
     \end{subfigure} 
  
     \begin{subfigure}{0.5\textwidth}
         \centering
         \includegraphics[width=0.7\textwidth]{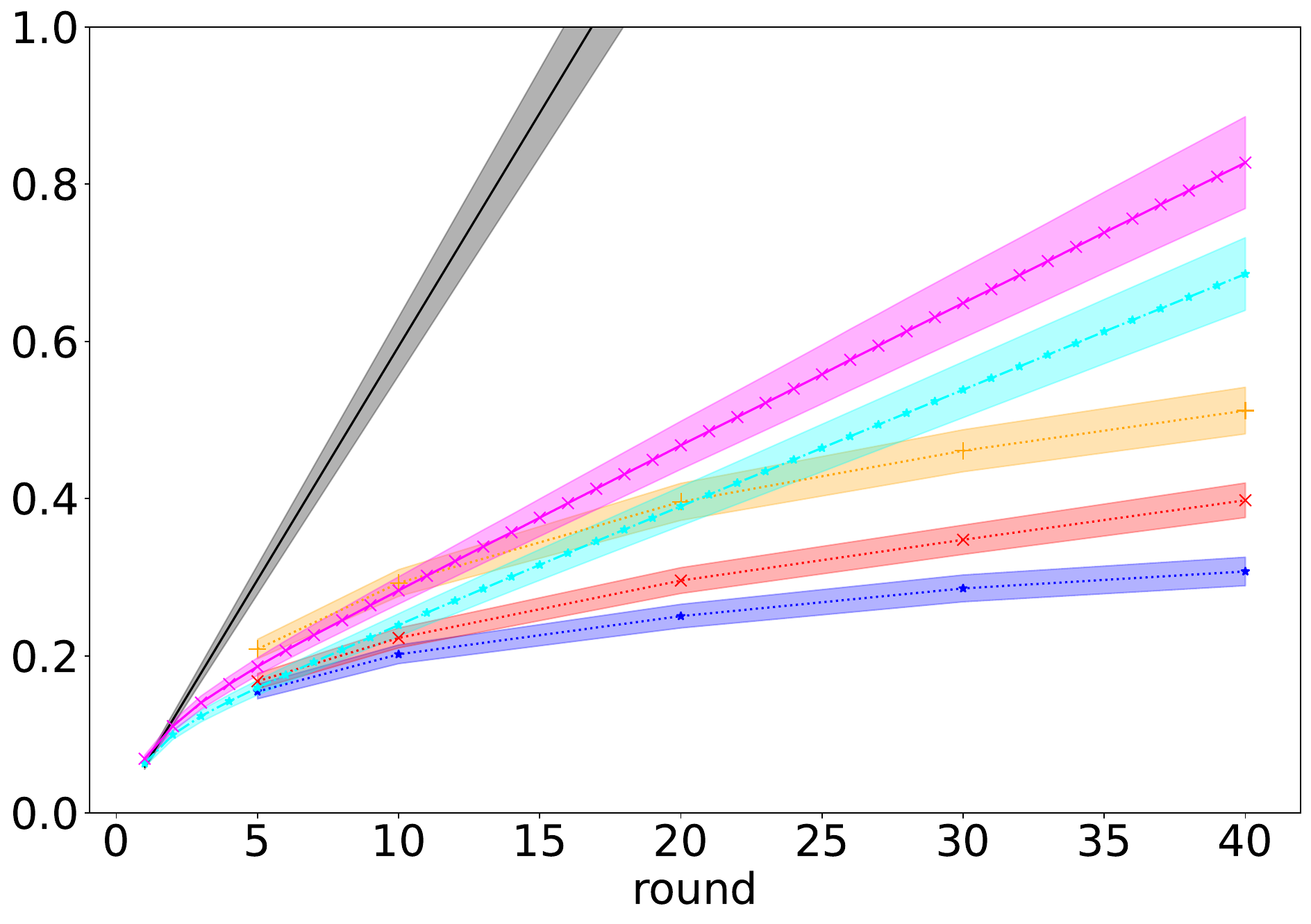}
         \caption{2x2 actions, constraint (C2) }
         \label{fig:3D-uniform-game:cumregret-2x2action-C2}
     \end{subfigure}
    \vspace{1em}
    \caption{Final cumulative regret for 3D uniformly random games from 1000 replications}
    \label{fig:3D-uniform-game:cumregret}
    \vspace{1.5em}
\end{figure}

 In the next two sections, we want to observe how each algorithm behaves in some specific situations.
 
\subsection{2D Game: Low-risk Low-return or High-risk High-Return} 

We now turn to investigate how the non-myopic perspective of longEU influences its manipulation choices.
In partiucular, our hypothesis is that when $T$ is large, longEU policies are willing to go for the high-risk high-return utility instead of a low-risk low-return utility. 
To do so, we design the following game to test this:

\begin{center}
\text{Leader's Payoff Matrix} $ X^L=\begin{bmatrix}
[1,1] & [1.6,1.2] \\
[0.1,0.1] & [2.4,1] 
\end{bmatrix} $

\text{Follower's Payoff Matrix} $ X^F=\begin{bmatrix}
[1,1] & [0.8,0.2] \\
[1,1] & [0.8,0.2] 
\end{bmatrix}  $ \\
where $\bm{w}^L = [0.4,0.6]$ and $\bm{w}^F = [0.8,0.2]$ .
\end{center}
For the leader, under positive cost constraint (C1), we have
\begin{itemize}
    \item Leader's action $l=1$ is the best action without manipulation, which gives a utility of 1 as the follower will choose BR(1)=1. It is also the low-risk low-return manipulated action by spending [0.4,0] to incentivise the follower to choose $f=2$ and which results in a utility of 1.2 for the leader. 
    \item Leader's action $l=2$ allows for a high-risk, high-return manipulated action by spending the optimal cost of [0.4,0] to get the utility of 1.4, but if the manipulation fails, they get 0.1.
\end{itemize}

\begin{figure}[h]
     \begin{subfigure}{0.5\textwidth}
         \centering
         \includegraphics[width=0.7\textwidth]{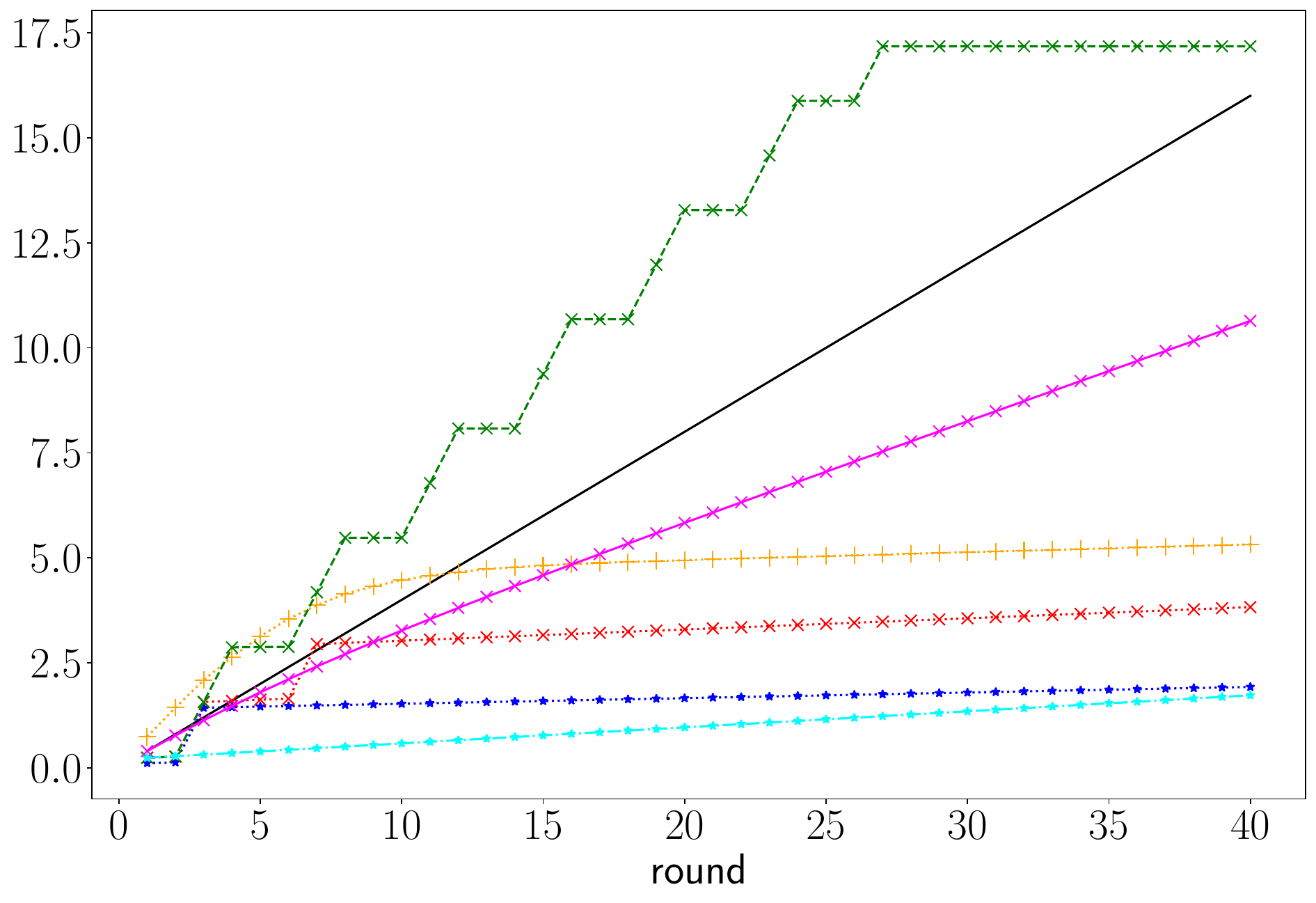}
         \caption{Cumulative Regret }
         \label{fig:2D-high-risk:cumregret}
         \vspace{1em}
     \end{subfigure}
    \hfill
     \begin{subfigure}{0.5\textwidth}
         \centering
         \includegraphics[width=0.7\textwidth]{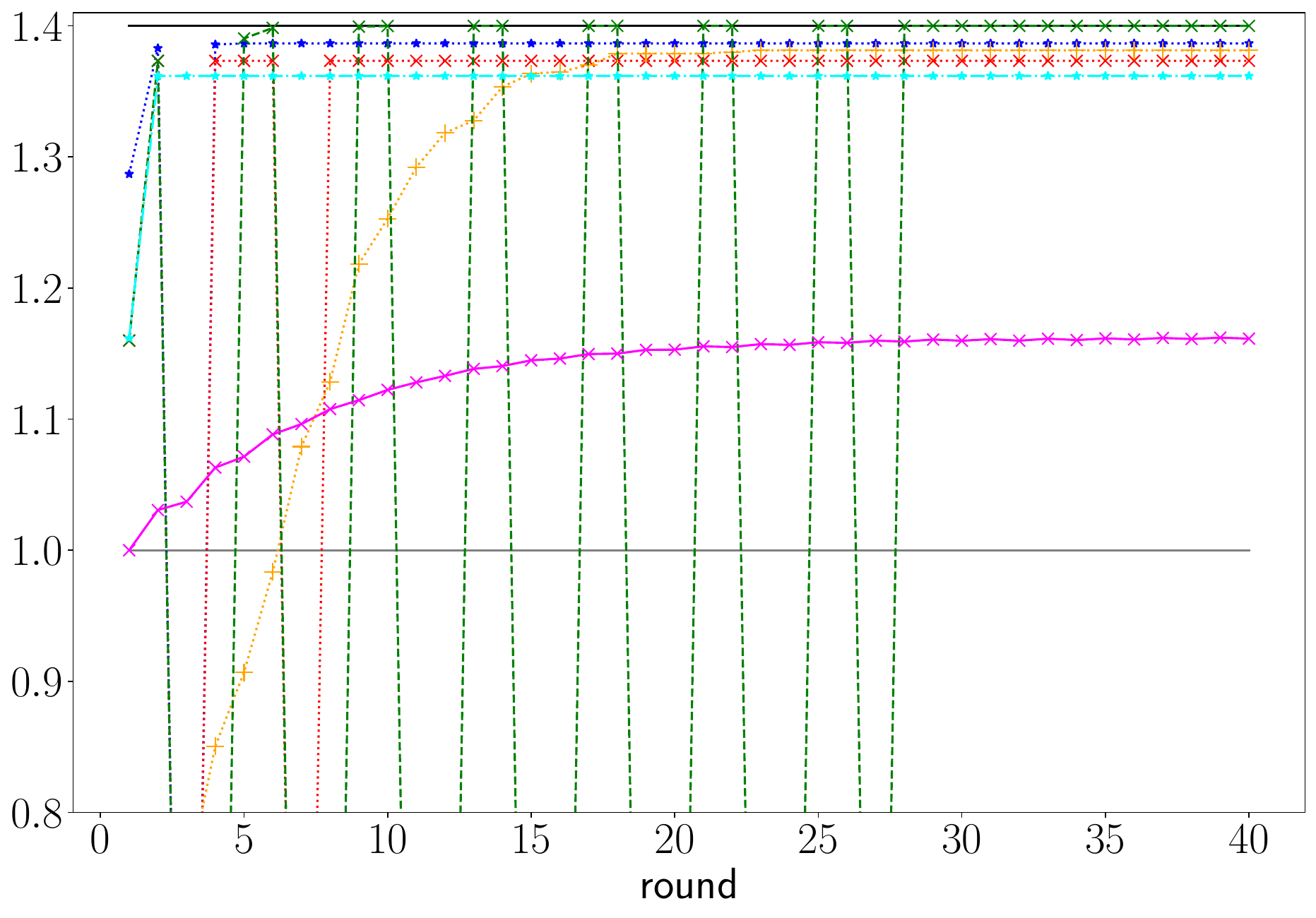}
         \caption{Leader's Utility}
         \label{fig:2D-high-risk:utility}
     \end{subfigure}
     \hfill 
     \begin{subfigure}{0.5\textwidth}
         \centering
         \vspace{1em}
         \includegraphics[width=0.9\textwidth]{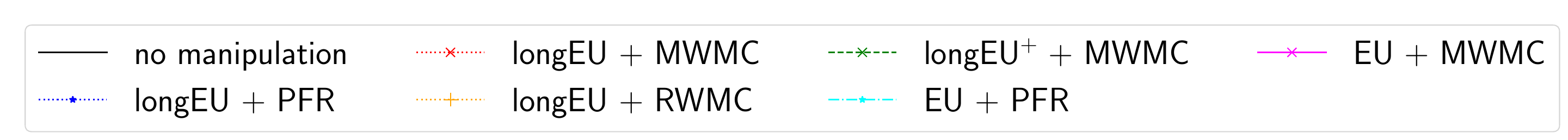}
     \end{subfigure}
    \hfill 
    \caption{Performance measures on 2D high-risk-high-return Game of different policies given $T=40$ }
    \label{fig:2D-high-risk}
    \vspace{2em}
\end{figure} 
\noindent With these specific games, we want to obtain a better insight into the behaviour of the different policies over the run, so we report on (a) the cumulative regret over the run, and (b) the leader's utility for each iteration, for a single fixed horizon $T$ (this makes a difference only for longEU which needs to know $T$). From Figure \ref{fig:2D-high-risk}, the policies' behaviour can be summarised as follows
\begin{itemize}
    \item Overall, the cumulative regret trend is quite consistent with the random game. However, EU+PFR  is the best policy with the smallest cumulative regret in this case. 
    \item EU + PFR  starts from the low-risk-low-return manipulated action. Once this manipulation has been successful, it then uses the same cost to manipulate the high-risk-high-return action (as the follower's payoffs are identical, it is clear that the follower will accept the manipulation also in the high risk case).
     \item longEU +PFR  starts from the high-risk-high-return manipulated action since it is more attractive from the long-term perspective. The policy lowers the cost to manipulate such an action, but it fails once at round 3. Shortly after that, the policy stops exploring and offers a higher than necessary manipulation cost. Note that this is not a contradiction to the consistency of longEU+PFR, but due to the fact that the policy 'knows' that it has a limited budget, so it is less explorative.
     \item EU + MWMC  tends to manipulate the low-risk-low-return action since it believes to fail at 50$\%$, and the loss for the high-risk-high-return action is much higher than a gain. 
    \item longEU + RWMC policy performs worse than longEU + MWMC  since it is slower to improve the feasible region in the exploration phase.
    \item longEU$^+$+ MWMC explores a new weight until the true weight is found, so it alternates between underestimation and overestimation of the actual weight. This leads to the alternation between success and failure to manipulate.
    
\end{itemize}

\subsection{2D Game: Play Safe or Not} 
We now investigate what happens if the algorithms get stuck in a non-informative manipulation cost called the play-safe cost. 
Such a cost ensures a 100$\%$ acceptance rate from the follower without gaining information about the feasible region (i.e., the algorithms will perform suboptimally).
Our goal is to understand which algorithms would perform the best in this worst-case scenario.
To do so, we design the following game:
\begin{center}
\text{Leader's Payoff Matrix} $ X^L=\begin{bmatrix}
[1,1] & [1.3,1.3] \\
[0.1,0.1] & [2,2] 
\end{bmatrix} $
\text{Follower's Payoff Matrix} $ X^F=\begin{bmatrix}
[1,1] & [0.9,0.9] \\
[1,1] & [0.5,0.5] 
\end{bmatrix}$ \\
where $\bm{w}^L = [0.6,0.4]$ and $\bm{w}^F = [0.8,0.2]$ .
\end{center}
For the leader, under positive cost constraint (C1), we have
\begin{itemize}
    \item Leader's action $l=1$ is the best action without manipulation, which gives a utility of 1. This action is also the low-risk low-return manipulated action by spending [0.1,0] to get the utility of 1.24. However, one can play safe by spending [0.1,0,1] to surely get the utility of 1.2. 
    \item Leader's action $l=2$ is the high-risk, high-return manipulated action by spending the optimal cost of [0.625,0] to get a utility of 1.625, but if the manipulation fails, they get the utility 0.1. However, one can play safe by spending [0.5,0,5] to surely get 1.5. 
\end{itemize}
\begin{figure}[h]
     \begin{subfigure}{0.5\textwidth}
         \centering
         \includegraphics[width=0.7\textwidth]{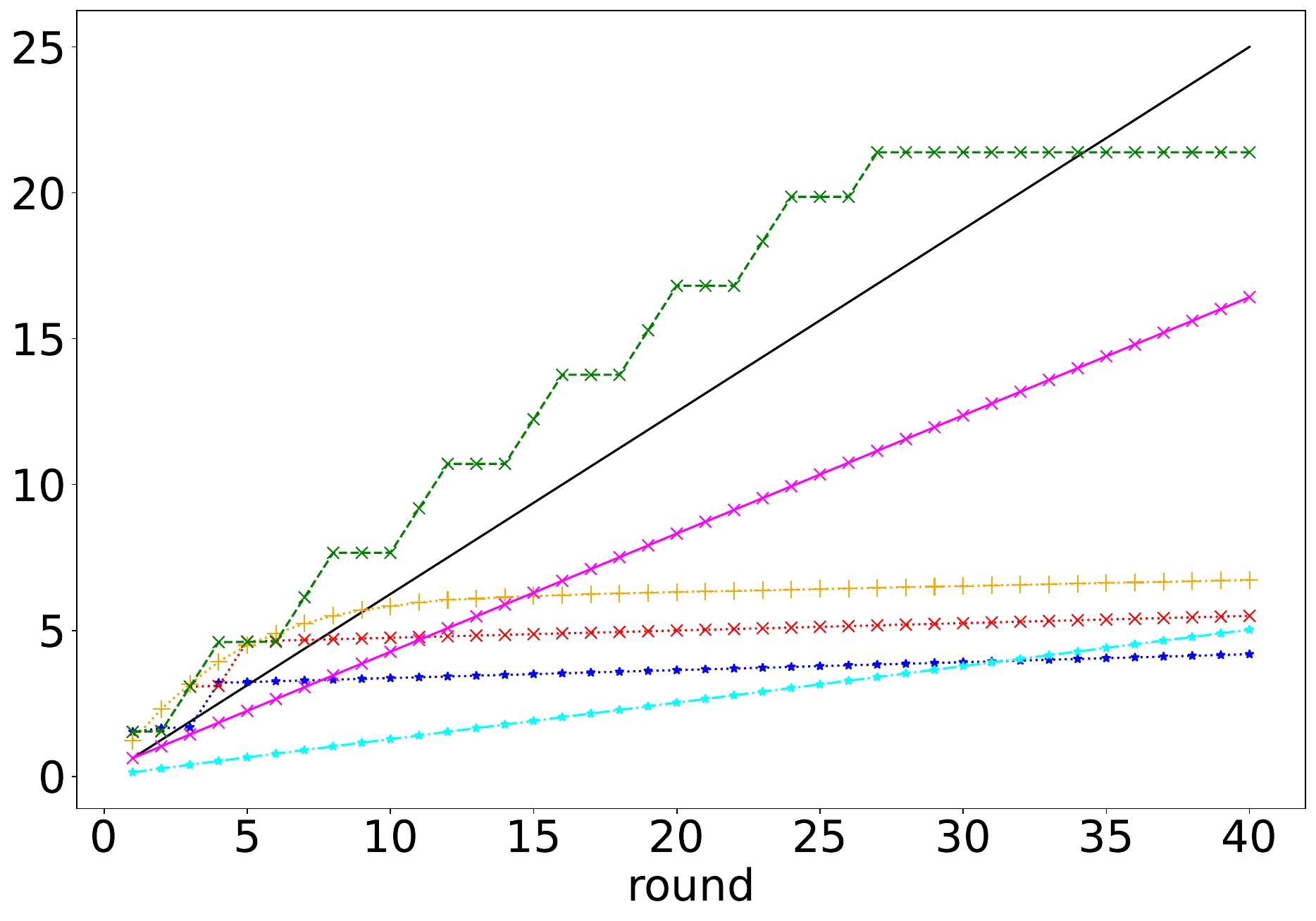}
         \caption{Cumulative Regret }
         \vspace{1em}
         \label{fig:2D-play-safe:cumregret}
     \end{subfigure}
    \hfill 
     \begin{subfigure}{0.5\textwidth}
         \centering
         \includegraphics[width=0.7\textwidth]{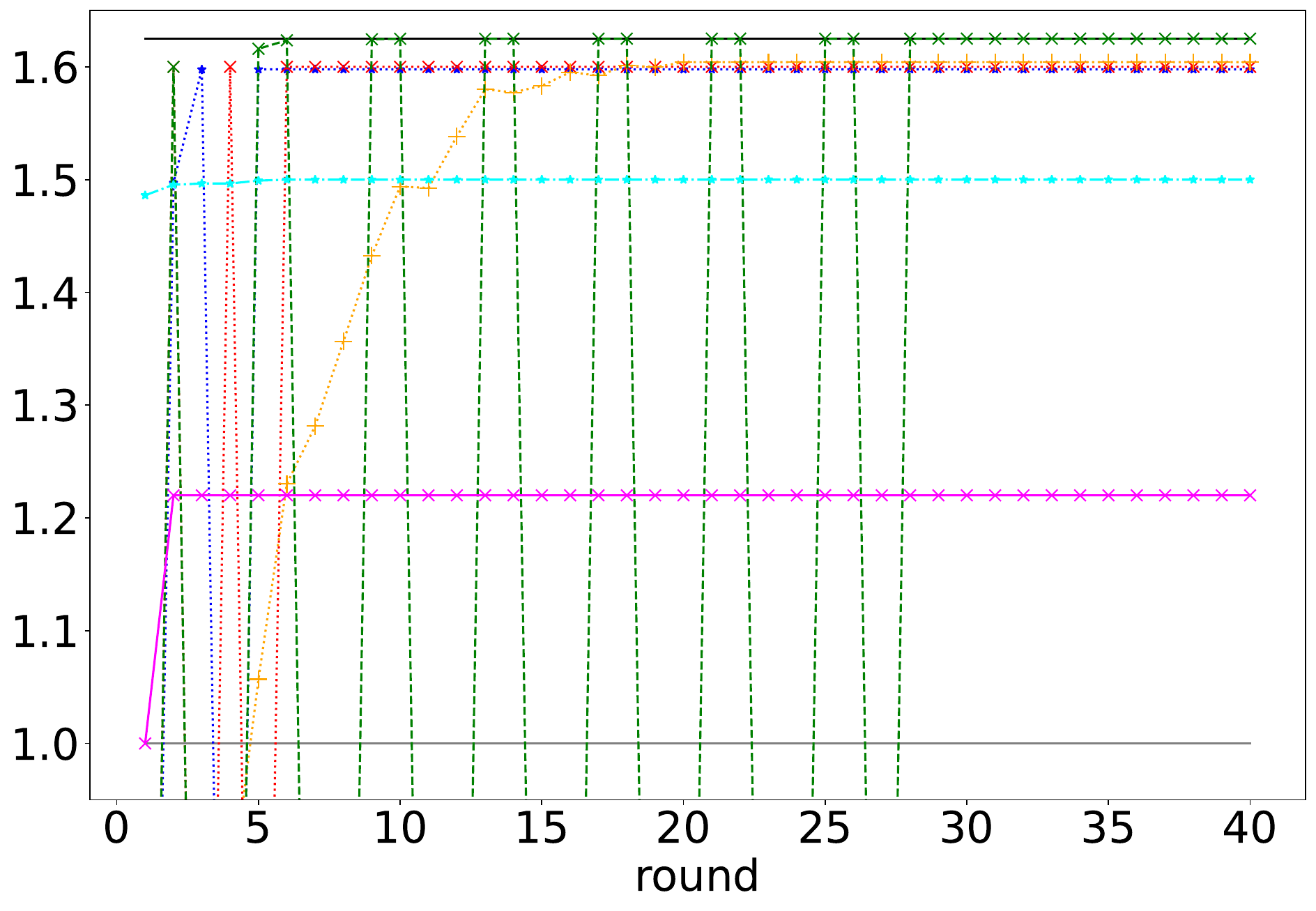}
         \caption{Leader's Utility}
         \label{fig:2D-play-safe:utility}
     \end{subfigure}
    \hfill 
    \caption{Performance measures on 2D Play-safe-or-not Game of different policies given $T=40$}
        \label{fig:2D-play-safe}
    \vspace{2em}
\end{figure}

\noindent From Figure \ref{fig:2D-play-safe}, longEU + PFR  is the best policy and only  EU + PFR  decides to play safe to receive a utility of 1.5.

\section{Conclusion and Future Directions}
In this paper, we investigated payoff manipulation in multi-objective repeated Stackelberg games. Incentivising a rational follower by sacrificing a fraction of a vector payoff to steer the follower's action towards the leader's preference can lead to a higher utility for both, the leader and the follower. To identify the best manipulation, the leader needs to learn about the follower's utility function.  

Assuming the utility function is linear, the follower's preference can be constrained by estimating the feasible weight region given the history of manipulation outcomes. We proposed the longEU policy, a non-myopic manipulation policy, to effectively incentivise the follower while minimising the cumulative regret of the leader. In empirical tests, the proposed longEU policy generally achieved a lower cumulative regret than the EU policy, a myopic manipulation policy, if the game is repeated over a sufficient number of iterations. Assuming the follower reacts deterministically with the BR, we simply model the follower's possible weight by a feasible region constrained by linear inequalities given the follower's preferred action on the leader's manipulation. We introduce two approaches to calculate the longEU value; 1) PFR: assigning a uniform distribution over the feasible region to compute the probability of acceptance 2) RWMC (and MWMC): computing the minimal manipulation given a weight first and assigning the probability of accepting the minimal cost as half. The PFR approach shows the best performance over random games and pre-defined games. One limitation of the PFR approach is that it requires substantial computational time in high-dimensional problems with a large action set. To deal with the large action set, one can ignore actions which are likely to provide a lower utility than the current best utility. The second approach can accelerate the optimisation process, but results in a higher cumulative regret. Nevertheless, empirically, both approaches with longEU policy seem to have a sublinear growth. We also prove their convergence to the optimal manipulation in an infinite game. 

There are many avenues for future research. The payoff manipulation allows for some sort of collaboration of rational agents because the offered incentive from the leader provides at least a higher utility compared to the best response given the leader's action, and there is no reason to reject a good offer. However, if the follower is also a learner, the advantage of payoff manipulation for the leader may be reduced. In order to maximise their utility, the follower could deliberately reject some offers to make the leader think that the offer was not good enough and improve on the offer, or the follower may implement a mixed strategy to avoid the leader learning their utility function directly. Finally, we can consider penalties for the follower instead of incentives.


\begin{ack}
The first author gratefully acknowledges support by the Engineering and Physical Sciences Research Council through the Mathematics of Systems II Centre for Doctoral Training at the University of Warwick (reference EP/S022244/1).
\end{ack}


\bibliography{mybibfile}

\onecolumn
\appendix
\section*{Appendix}
\section{Mathematical proof} \label{sec:appendix_proof}
\subsection{Proof of Lemma \ref{thm:optimal_cost_C1}}
\begin{proof}
    Since the number of pairs of actions $(l,f)$ is finite, we can consider the optimal cost for each pair of actions and compare the utilities of such manipulations directly. Since the utility functions are linear, then for all pairs of action $(l,f)$, the (OMP) problem is a linear program. Thus there is a solution in one of the corner points of the feasible region. For any $(l,BR(l)))$, the optimal cost is $\bm{c}^*=\bm{0}$. For any pair of leader's action and follower's non-BR, the hyperplane of constraint (IC) creates a cutting point on all axes at $\bm{c}^*=(0,...,0,c_d,0,...,0)$ where $c_d\geq 0$ for all $d \in [D]$. Hence the solution is one of these cutting points. 
\end{proof}

 \subsection{Proof of Lemma \ref{thm:optimal_cost_C2}}
 \begin{proof}
    The proof follows the same idea as in Lemma \ref{thm:optimal_cost_C1}. It suffices to find the boundary points of feasible regions. For any $(k,BR(l))$, the optimal cost is $\bm{c}^*=\bm{0}$. For any pair of leader's action and follower's non-BR, the hyperplane of constraint (IC) either creates a cutting point on all edges of the hyperrectangle at $\bm{c}^*=(\bm{0},c_d,\bm{c}_{max})$ where $c_d\geq 0$  and $\bm{c}_{max}=(X^L_{d_1},X^L_{d_2},...,X^L_{d_k})$ for some $d,d_1,d_2,...,d_k \in [D]$, or has no intersection with the hyperrectangle. Hence the solution is either one of these cutting points or no manipulation. 
\end{proof}

\subsection{Proof of Theorem \ref{thm:alg1_converge}}
\begin{proof}
    Since all best response actions $BR(l)$ are known for all $l \in \mathcal{A}^L$, then at round $t=0$, $u^t_{best}=\max_{l \in \mathcal{A}^L} u^L(l,BR(l))$.  Since $T \rightarrow \infty$, the longEU calculation of \eqref{eq:longEU} is replaced by \eqref{eq:asymp_longEU}, longEU$^+$. 
    
    By Lemma \ref{thm:optimal_cost_C1}, the optimal cost is a single-objective cost $c^*$. Let $c_d^{min,t}(l,f)$ be the lower bound of a single-objective cost corresponding to the boundary of the $d^{th}$-objective estimated weight. The index $(l,f)$ refers to the value corresponding to the pair of actions $(l,f)$. Similarly, define $c_d^{max,t}(l,f)$ as the upper bound of $d^{th}$-objective cost to spend corresponding to the current best utility $u^t_{best}$.
    
    Since an interested game is a beneficial game, there exists $d^*,l^*,f^*$ such that $c_{d^*}^{min,t}(l^*,f^*)<c^*< c_{d^*}^{max,t}(l^*,f^*)$. In addition, we can restrict the search cost space $\mathcal{C}$ of this policy to a single-objective cost space. By Lemma \ref{thm:better_cost_longEU}, $c_{d^*}^{max,t}(l^*,f^*)$ is a decreasing sequence towards $c^*$. Meanwhile, $c_{d^*}^{min,t}(l^*,f^*)$ is an increasing sequence towards $c^*$ since the feasible region either is improved or remains the same. We only need to ensure that $c_{d^*}^{max,t}(l^*,f^*)-c_{d^*}^{min,t}(l^*,f^*) \rightarrow 0$\\

    \noindent \textbf{Smallest Case}:  \\
    Consider the simplest case where $|\mathcal{A}^L|=1$ and $|\mathcal{A}^F|=2$. Let $X^F:= [[x_1,x_2] \ [y_1,y_2]]$ be the follower's payoff matrix. WLOG, suppose the best-response is $BR(1)=1$. Since $D=2$, the feasible region is reduced to an interval $[w_d^{min,t},w_d^{max,t}]$. Define $w_d^{min,t},w_d^{max,t}$ as the minimum and maximum of the $d^{th}$-objective weight, respectively, where $w_2^{min,t}=1-w_1^{max,t}$ and  $w_2^{max,t}=1-w_1^{min,t}$. Note that we remove indices $(l,f)$ and simply write $c_1^{min,t},c_2^{min,t}$,$c_1^{max,t},c_2^{max,t}$ for convenience. Then $c_1^{min,t},c_2^{min,t}$ are updated based on $w_1^{min,t} (\text{or} \ w_1^{max,t}),w_2^{max,t} (\text{or} \ w_2^{max,t})$, respectively.
    
    \textbf{Fact:} There exists the minimal cost $\bm{c}^{o}:=(c^o_1,c^o_2)=X^F(1,1)-X^F(1,2)$, or the so-called \textit{play-safe cost}, to manipulate the follower into choosing action $f=2$ for all possible weights $\bm{w} \in FR(\bm{w};\emptyset)$. Therefore, we can separate the proof into three cases: (I) $c^o_1 > 0, c^o_2 > 0$, (II) $c^o_1 > 0, c^o_2 \leq 0$ and (III) $c^o_1 \leq 0, c^o_2 > 0$. It is impossible to have $c^o_1, c^o_2 < 0$ because the follower would not change its BR $l=1$. 

    \begin{figure} 
        \centering
        \includegraphics[width=\linewidth]{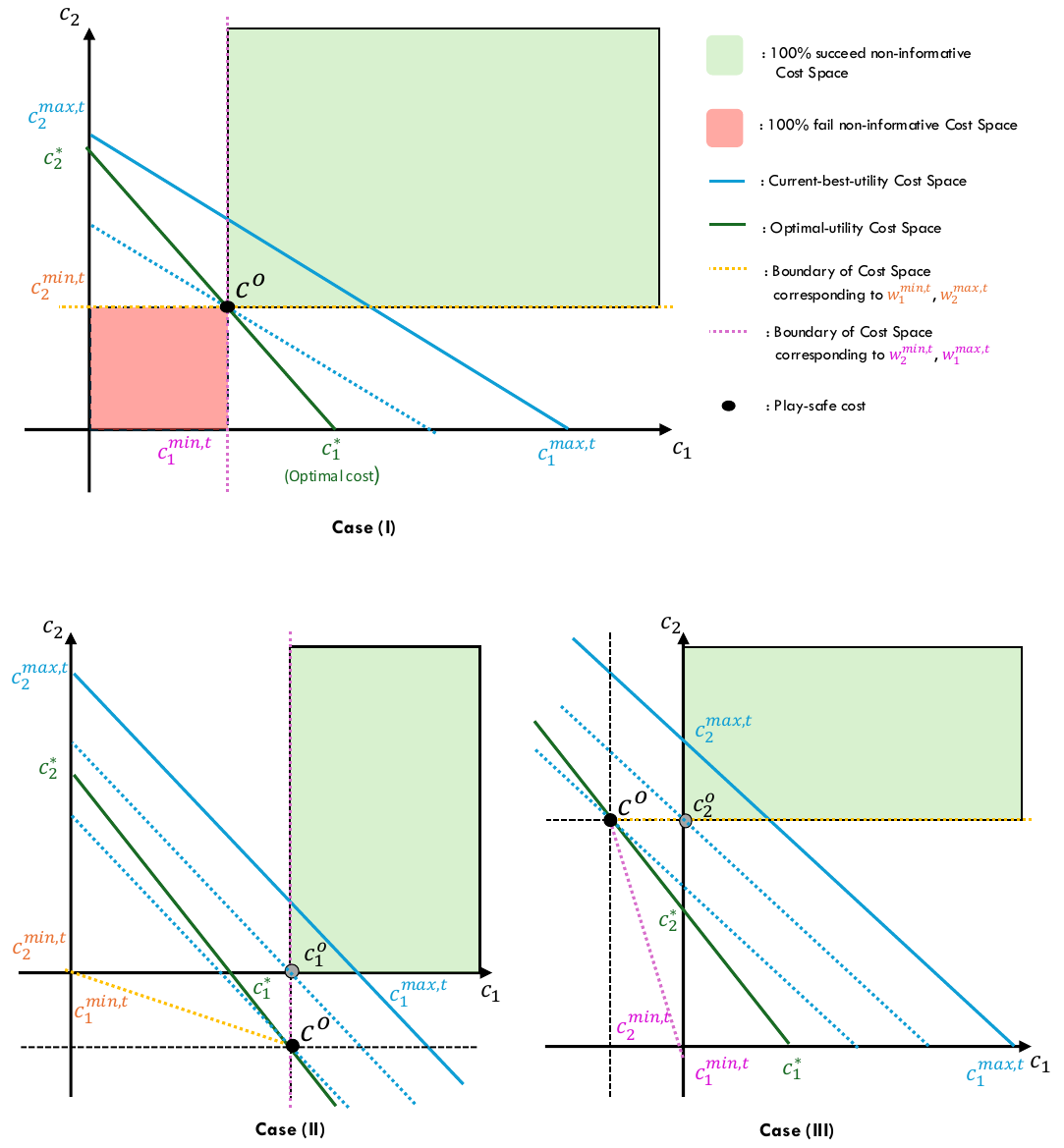}
        \caption{Play-safe Cost Cases}
        \label{fig:proof_2D_longEU_PFR}
    \end{figure}
    
    \textbf{Case (I)} For $c^o_1 > 0, c^o_2 > 0$:  
    At round $t\geq 1$, we will show that the optimised cost of longEU$^+$ is either
    \begin{itemize}
        \item $(\tilde{c}_1^t,0)$ such that $\tilde{c}_1^t= -\alpha+\sqrt{(\alpha+c_1^{min,t-1})(\alpha+c_1^{max,t-1})}$ or
        \item $(0,\tilde{c}_2^t)$ such that $\tilde{c}_2^t= \alpha+\sqrt{(-\alpha+c_2^{min,t-1})(-\alpha+c_2^{max,t-1})}$ where $\alpha=\Delta y- \Delta x$, $\Delta y=y_1-y_2$ and $\Delta x=x_1-x_2$.
    \end{itemize} 
    Denote $W^t$ as a uniformly-distributed random variable in $FR(\bm{w}| \mathcal{D}^t)$ and $p^t(\bm{c}^t)$  as the probability of accepting cost $\bm{c}^t:=(c_1^t,c_2^t)$. Firstly, we consider the longEU$^+$ function restricted to the first-objective cost. \\
  
  For $t\geq1$ and there is no successful manipulation,
  \begin{align*}
      \text{longEU}^+(1,2,(\tilde{c}_1,0) | W^t) & = \left({u^L \left(X^L(1,2)-(\tilde{c}_1,0) \right)}- u^L \left(X^L(1,2)-(c_1^{max,t-1},0) \right)\right) p^t((\tilde{c}_1,0))\\
      &=w_1^L(c_1^{max,t-1}-\tilde{c}_1)\left(\dfrac{w_1^{max,t}-\left(\dfrac{x_2 -y_2 }{\Delta y - \Delta x + \tilde{c}_1 }\right)}{w_1^{max,t}-w_1^{min,t}} \right) \\
      &=w_1^L(c_1^{max,t-1}-\tilde{c}_1)\left(\dfrac{\left(\dfrac{x_2 -y_2 }{\Delta y - \Delta x + c_1^{min,t-1} }\right)-\left(\dfrac{x_2 -y_2 }{\Delta y - \Delta x + \tilde{c}_1 }\right)}{\left(\dfrac{x_2 -y_2 }{\Delta y - \Delta x + c_1^{min,t-1}}\right)} \right) \\
      &=w_1^L(c_1^{max,t-1}-\tilde{c}_1)\left( \dfrac{\tilde{c}_1 -c_1^{min,t-1}}{\alpha + \tilde{c}_1 } \right)
  \end{align*}

  Setting its derivative to zero, its stationary point is $\tilde{c}_1= -\alpha + \sqrt{(\alpha+c_1^{min,t-1})(\alpha+c_1^{max,t-1})}$. 
  
  For $t\geq1$ and there is at least one successful manipulation, define $h_d(\bm{w}):=\dfrac{\bm{w}\cdot \bm{c}^o}{w_d}$ as a function returning the minimal cost from only objective $d$  which is required to manipulate the follower with respect to the follower's weight $\bm{w}$. Then $\text{longEU}^+(1,2,(\tilde{c}_1,0) | W^t)=w_1^L(c_1^{max,t-1}-\tilde{c}_1)\left( \dfrac{\tilde{c}_1 -c_1^{min,t-1}}{\alpha + \tilde{c}_1 } \right)\dfrac{\alpha + h_1((w_1^{max,t-1},w_2^{min,t-1}))}{h_1((w_1^{max,t-1},w_2^{min,t-1}))-c_1^{min,t-1}}$, and we obtain the same stationary point. 
  
  Similarly, if the longEU value from the second objective is higher, the leader will pay the cost of $(0,\tilde{c}_2)$.  The longEU$^+$ function is either $\text{longEU}^+(1,2,(0,\tilde{c}_2)=w_2^L(c_2^{max,t-1}-\tilde{c}_2)\left( \dfrac{\tilde{c}_2 -c_2^{min,t-1}}{-\alpha+ \tilde{c}_2 } \right)$ or  $\text{longEU}^+(1,2,(0,\tilde{c}_2)=w_2^L(c_2^{max,t-1}-\tilde{c}_2)\left( \dfrac{\tilde{c}_2 -c_2^{min,t-1}}{-\alpha + \tilde{c}_2 } \right)\dfrac{-\alpha + h_2((w_1^{min,t-1},w_2^{max,t-1}))}{h_2((w_1^{min,t-1},w_2^{max,t-1}))-c_2^{min,t-1}}$. Hence its stationary point is $\tilde{c}_2= \alpha +  \sqrt{(-\alpha+c_2^{min,t-1})(-\alpha+c_2^{max,t-1})}$.
    
    Finally, we will show that the manipulated cost will converge to the optimal cost. By AM-GM inequality, $\tilde{c}_1= -\alpha + \sqrt{(\alpha+c_1^{min,t-1})(\alpha+c_1^{max,t-1})} \leq (c_1^{min,t-1}+c_1^{max,t-1})/2$. The leader offers a  cost smaller than the average of cost interval $[c_1^{min,t-1},c_1^{max,t-1}]$. In other words, the relative gap from $\tilde{c}_1$ to $c_1^{min,t-1}$ is less than 0.5. Define the relative gap function from $\tilde{c}_1$ to $c_1^{min}$ as $g(c_1^{min},c_1^{max})=\dfrac{-\alpha + \sqrt{(\alpha+c_1^{min})(\alpha+c_1^{max})}-c_1^{min}}{c_1^{max}-c_1^{min}}$. Fixing $c_1^{min}$,  $g(c_1^{min},c_1^{max})$ is strictly increasing with respect to $c_1^{max}$. Oppositely, fixing $c_1^{max}$,  $g(c_1^{min},c_1^{max})$ is strictly decreasing with respect to $c_1^{min}$. Since $\{c_1^{max,t}\}_{t=0}$ is a decreasing sequence and $\{c_1^{min,t}\}_{t=0}$ is an increasing sequence, $\{g(c_1^{min,t},c_1^{max,t})\}_{t=0}$ is a positive increasing sequence bounded above by 0.5. A similar argument holds  for $\tilde{c}_2= \alpha +  \sqrt{(-\alpha+c_2^{min,t-1})(-\alpha+c_2^{max,t-1})}$.  When $t \rightarrow \infty$, the manipulation cost will make $c_1^{max,t}-c_1^{min,t} \rightarrow 0$ and  $c_2^{max,t}-c_2^{min,t} \rightarrow 0$. In other words, the optimal cost will be identified.

    \textbf{Case (II)} For $c^o_1 > 0, c^o_2 \leq 0$: The cost $(c^o_1,0)$ is the minimal cost restricted by the constraint (C1) to manipulate the follower with all possible weights. \text{The proof will follow the same idea as in Case (I).} We consider the longEU$^+$ function restricted to the first-objective cost again.
    
    For $t\geq1$ and there is no successful manipulation,
  \begin{align*}
      \text{longEU}^+(1,2,(\tilde{c}_1,0) | W^t) & = \left({u^L \left(X^L(1,2)-(\tilde{c}_1,0) \right)}- u^L \left(X^L(1,2)-(c_1^{max,t-1},0) \right)\right) p^t((\tilde{c}_1,0))\\
      &=w_1^L(c_1^{max,t-1}-\tilde{c}_1)\left(\dfrac{\left(\dfrac{x_2 -y_2 }{\Delta y - \Delta x + \tilde{c}_1 }\right)-w_1^{min,t-1}}{w_1^{max,t-1}-w_1^{min,t-1}} \right) \\
      &=w_1^L(c_1^{max,t-1}-\tilde{c}_1)\left(\dfrac{\left(\dfrac{x_2 -y_2 }{\Delta y - \Delta x + \tilde{c}_1 }\right)-\left(\dfrac{x_2 -y_2 }{\Delta y - \Delta x + c_1^{min,t-1} }\right)}{\left(\dfrac{x_2 -y_2 }{\Delta y - \Delta x + c_1^{min,t-1}}\right)} \right) \\
      &=w_1^L(c_1^{max,t-1}-\tilde{c}_1)\left( \dfrac{\tilde{c}_1 -c_1^{min,t-1}}{\alpha + \tilde{c}_1 } \right)
  \end{align*}
  Setting its derivative to zero, its stationary point is $\tilde{c}_1= -\alpha + \sqrt{(\alpha+c_1^{min,t-1})(\alpha+c_1^{max,t-1})}$. In another case, if there is at least one successful manipulation, $\text{longEU}^+(1,2,(\tilde{c}_1,0) | W^t)=w_1^L(c_1^{max,t-1}-\tilde{c}_1)\left( \dfrac{\tilde{c}_1 -c_1^{min,t-1}}{\alpha + \tilde{c}_1 } \right)\dfrac{\alpha + h_1((w_1^{max,t-1},w_2^{min,t-1}))}{h_1((w_1^{max,t-1},w_2^{min,t-1}))-c_1^{min,t-1}}$, then we have the same stationary point. In addition, if the longEU value from a second objective is higher, the leader will pay the cost of $(0,\tilde{c}_2)$ instead. Similarly, the stationary point is $\tilde{c}_2= \alpha +  \sqrt{(-\alpha+c_2^{min,t-1})(-\alpha+c_2^{max,t-1})}$. The rest of proof in this case is similar to Case (I).
    
    \textbf{Case (III)} For $c^o_1 \leq 0, c^o_2 > 0$: The proof is similar to Case (II).\\

    \noindent \textbf{General Case}: \\
    For the case where $|\mathcal{A}^L|\geq1$ and $|\mathcal{A}^F|\geq 2$, at $t=0$ for $d=1,2$, set $c^{max,t}_d=\min_{(l,f),f\neq BR(l)} c_d^{max,t}(l,f)$ as the global minimal single-objective cost providing the current best utility $u^{t}_{best}$ . Following the previous proof and updating all parameters based on the success of the manipulation, it yields that at round $t\geq 1$, the optimised cost of longEU$^+$ function is either
    \begin{itemize}
        \item $(\tilde{c}_1(l,f),0)$ such that $\tilde{c}_1= -\alpha(l,f)+\sqrt{(\alpha(l,f)+c_1^{min,t-1}(l,f))(\alpha(l,f)+c_1^{max,t-1})}$ where $p^t((\tilde{c}_1(l,f),0))<1$ or
        \item $(0,\tilde{c}_2(l,f))$ such that $\tilde{c}_2= \alpha(l,f)+\sqrt{(- \alpha(l,f)+c_2^{min,t-1}(l,f))(-\alpha(l,f)+c_2^{max,t-1})}$ where $p^t((0,\tilde{c}_2(l,f)))<1$  or
        \item $(c_1^s(l,f),0)$ such that $c_1^s(l,f):=  \min \{ c_1(l,f) < c^{max,t-1}_1 \ \text{where} \ p^t((c_1(l,f),0))=1 \}$ or
        \item $(0,c_2^s(l,f))$ such that $c_2^s(l,f):=  \min \{ c_2(l,f) < c^{max,t-1}_2 \ \text{where} \ p^t((c_2(l,f),0))=1 \}$ for some $(l,f)$
    \end{itemize} 
    Note that $\alpha(l,f)$ is a constant based on a payoff matrix and a manipulated action. The last two options come from the case when a successful manipulation creates a new utility which is better than the current best cost $(c^{max,t}_1,0)$ or $(0,c^{max,t}_2)$ with the probability of 1. If such a situation occurs, the longEU$^+$ function restricted to the single objective becomes 
    \begin{align*}
        \text{longEU}^+(l,f,\tilde{c}_d(l,f) | W^t) & = w_d^L(c_d^{max,t-1}(l,f)-\tilde{c}_d(l,f)) p^t((\tilde{c}_d(l,f),0)) & \text{if} \ \tilde{c}_d(l,f)\leq c_d^s(l,f) \\    
        & \text{or} \ w_d^L(c_d^{max,t-1}(l,f)-\tilde{c}_d(l,f)) & \text{if} \ \tilde{c}_d(l,f) > c_d^s(l,f). 
    \end{align*}  
    Thus the leader will pick one of the cost options as shown. However, when one of the last two options is picked, the current best cost is updated but the estimated weight has no update. The leader then picks the cost from the first two options again and the convergence is still guaranteed.
\end{proof}

For Algorithm \ref{alg:longEU_policy_RWMC}, when $T \rightarrow \infty$, the algorithm will manipulate if the best minimal cost $\bm{c}_{r}$ can provide a better utility over the current one. Intuitively, if it keeps randomising a new weight infinitely, the optimal cost will be guaranteed. However, the process can be very slow compared to Algorithm \ref{alg:longEU_policy_PFR} since there is a risk that Algorithm \ref{alg:longEU_policy_RWMC} decides to not manipulate and the feasible weight region has no update. For instance, if we have no preference information and the leader has a small area of acceptable cost then it has a very low chance to find the initial acceptable cost. \\

\subsection{Proof of Theorem \ref{thm:alg2_converge}}

\begin{proof}
    When $T \rightarrow \infty$, Algorithm \ref{alg:longEU_policy_RWMC} has a slight modification: Once the minimal cost $\hat{\bm{c}}_{r}(l,f)$ is identified, the algorithm finds the best action by $(l_r,f_r)= \underset{l,f,\bm{c}}{\mathrm{argmax}} \ \text{longEU}^+(l,f,\hat{\bm{c}}_{r}(l,f) | t,T, \hat{\bm{w}}_{r})$ and checks whether $\text{longEU}^+(l_r,f_r,\hat{\bm{c}}_{r}(l_r,f_r) | t,T, \hat{\bm{w}}_{r}) < 0$ to explore a new cost; otherwise, play the current best cost. \\

    By this process, the estimated weight may not be updated if the algorithm does not explore a new cost. Thus, we still need to prove that the probability of exploring a new cost is non-zero. Since all best response actions $BR(l)$ are known for all $l \in \mathcal{A}^L$, then the play-safe cost for leader's action $l$ and follower's action $f$, $\bm{c}^o(l,f)=u^F(l,BR(l))-u^F(l,f)$. Let $c_d^{min,t}(l,f) \geq 0$ be the lower bound of the $d^{th}$-objective cost that is possible to manipulate the follower to choose action $(l,f)$ at round $t$ and is updated based on the boundary of a feasible region. Note that $\{c_d^{min,t}(l,f)\}_t$ is an increasing sequence since the size of the feasible region shrinks when an informative cost is offered. 
    
    Let $c_d^{max,t}(l,f)$ be the $d^{th}$-objective cost that provides the current best utility $u^t_{best}$ for action $(l,f)$ at round $t$..  
    By Lemma \ref{thm:better_cost_longEU}, $\{c_d^{max,t}(l,f)\}_t$ is a decreasing sequence. From Lemma \ref{thm:optimal_cost_C1}, the optimal cost $\bm{c}^*$ is a single-objective cost. Then there exists $l^*,f^*,d^*$ such that $c_{d^*}^{min,t}(l^*,f^*)<c^*_{d^*}<c_{d^*}^{max,t}(l^*,f^*)$ for all $t$. 
    In addition, Lemma \ref{thm:optimal_cost_C1}, fixing action $(l,f)$, the optimal cost with respect to the weight is always a single-objective cost. Define a function $f_{(l,f,d)}(\bm{w}) =\dfrac{\bm{w} \cdot \bm{c}^o(l,f)}{w_d}$

      which returns the minimal $d^{th}$-objective cost to manipulate the follower at action $(l,f)$ with respect to weight $\bm{w} \in  FR(\bm{w}; \emptyset)$. {Since $f_{(l,f,d)}(\bm{w})$ is surjective}, for all $c^t_d$ such that $c_d^{min,0}(l,f)<c^t_d<c_d^{max,0}(l,f)$, there is a weight requiring such a cost to manipulate. In other words, by assigning a uniform distribution over the current feasible region, the probability of exploring any new cost $c^t_d$ is non-zero. As long as, there is an update from any manipulation, the  $[c_{d^*}^{min,t}(l^*,f^*),c_{d^*}^{max,t}(l^*,f^*)]$ is also updated (shrinking).  Therefore $ c_{d^*}^{max,t}(l^*,f^*)-c_{d^*}^{min,t}(l^*,f^*) \rightarrow 0$. 
\end{proof}

\section{Additional Empirical Results} \label{sec:appendix_results}
Note that the code is available on  \href{https://github.com/S-Phurinut/MO-STB-2p-game}{github.com/S-Phurinut/MO-STB-2p-game}. 
\subsection{Uniformly random game with linear utility} \label{sec:appendix_results_linear}
 In this experiment, the payoff matrices as well as the utility weights are uniformly randomised. The payoff value for each objective is bounded between 0 and 1. We found that only approximately 30-40$\%$ of 2D and 3D random games are beneficial games; but 40-50$\%$ for 10D random games. The average cumulative regret $\pm$ standard error over replications for different maximum numbers of rounds  $T$ are reported (note that longEU uses knowledge about $T$ and thus we here report only on the final cumulative regret given $T$). With a larger number of available pairs of actions (2x2 actions), 10D games show a smaller cumulative regret since they have a higher percentage of beneficial games compared to 2D and 3D games (See Figures \ref{Appendix:fig:2D-uniform-game:cumregret}, \ref{Appendix:fig:3D-uniform-game:cumregret} and \ref{Appendix:fig:10D-uniform-game:cumregret}). 

 Additionally, we report the final cumulative regret plot of 2D games with a large action space, i.e., 10x10 actions. Figure \ref{Appendix:fig:2D-uniform-game:cumregret-10x10} shows a similar trend to the case with a smaller action space. However, the cumulative regret of the EU policy is much worse due to the large set of pairs of the manipulated action space.  

\subsection{Uniformly random game with Cobb-Douglas utility} \label{sec:appendix_results_Cobb}

Generally, EU and longEU policies can be applied directly to any form of parametric utility function. The feasible region of parameters is shaped according to the form of the utility function to approximate the probability of accepting the offer. 

In this section, we test our proposed policy with a non-linear utility function. We use a common non-linear utility function in the economic literature, the Cobb-Douglas (CD) function. 
The general form of a CD  function for a set of $n$ inputs is
$$ u(x_1,x_2,...,x_d;\bm{w})= \prod^D_{i=1} x_i^{w_i}$$
where $w_i$ are parameters determining the relative importance of each objective. One can normalise the Cobb-Douglas function by setting $\sum_{i=1}^D w_i=1$. 

We run similar experiments as in the previous section with 2D games, 2x2 actions and normalised Cobb-Douglas utility for both players, where $w_i$ are uniformly randomised over the simplex space for both players. We only compare the performance of the PFR approach with different models to approximate a probability of accepting the offer based on \eqref{eq:prob_accept_PFR}; 1. Linear utility 2. Cobb-Douglas (CD) utility. Figure \ref{Appendix:fig:2D-uniform-game:cumregret-CD} depicts a small gap in cumulative regret, even if the leader uses an inconsistent model to approximate the follower's preference.

 \begin{figure}[h]
        \begin{subfigure}{\textwidth}
         \centering
         \vspace{-1em}
         \includegraphics[width=0.7\textwidth]{pic/random_label.pdf}
     \end{subfigure}
     \begin{subfigure}{0.48\textwidth}
         \centering
         \includegraphics[width=0.8\textwidth]{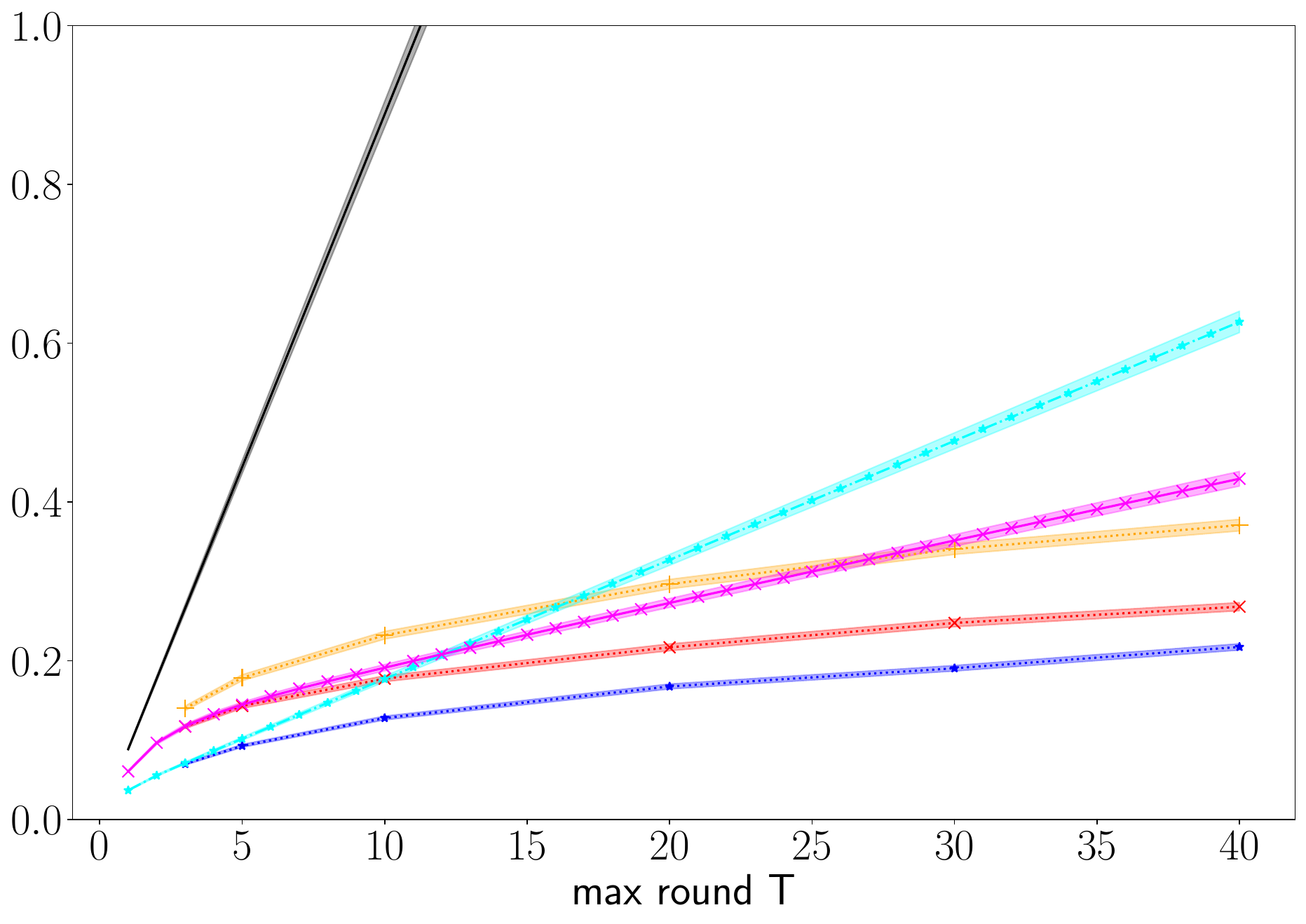}
         \caption{1x2 actions, constraint (C1)}
         \vspace{1em}
     \end{subfigure}
    \hfill 
     \begin{subfigure}{0.48\textwidth}
         \centering
         \includegraphics[width=0.8\textwidth]{pic/2D_uniform_game-full-cumregret_1by2action_Bcost.pdf}
         \caption{1x2 actions, constraint (C2) }
         \vspace{1em}
     \end{subfigure} 
     
     \begin{subfigure}{0.48\textwidth}
         \centering
         \includegraphics[width=0.8\textwidth]{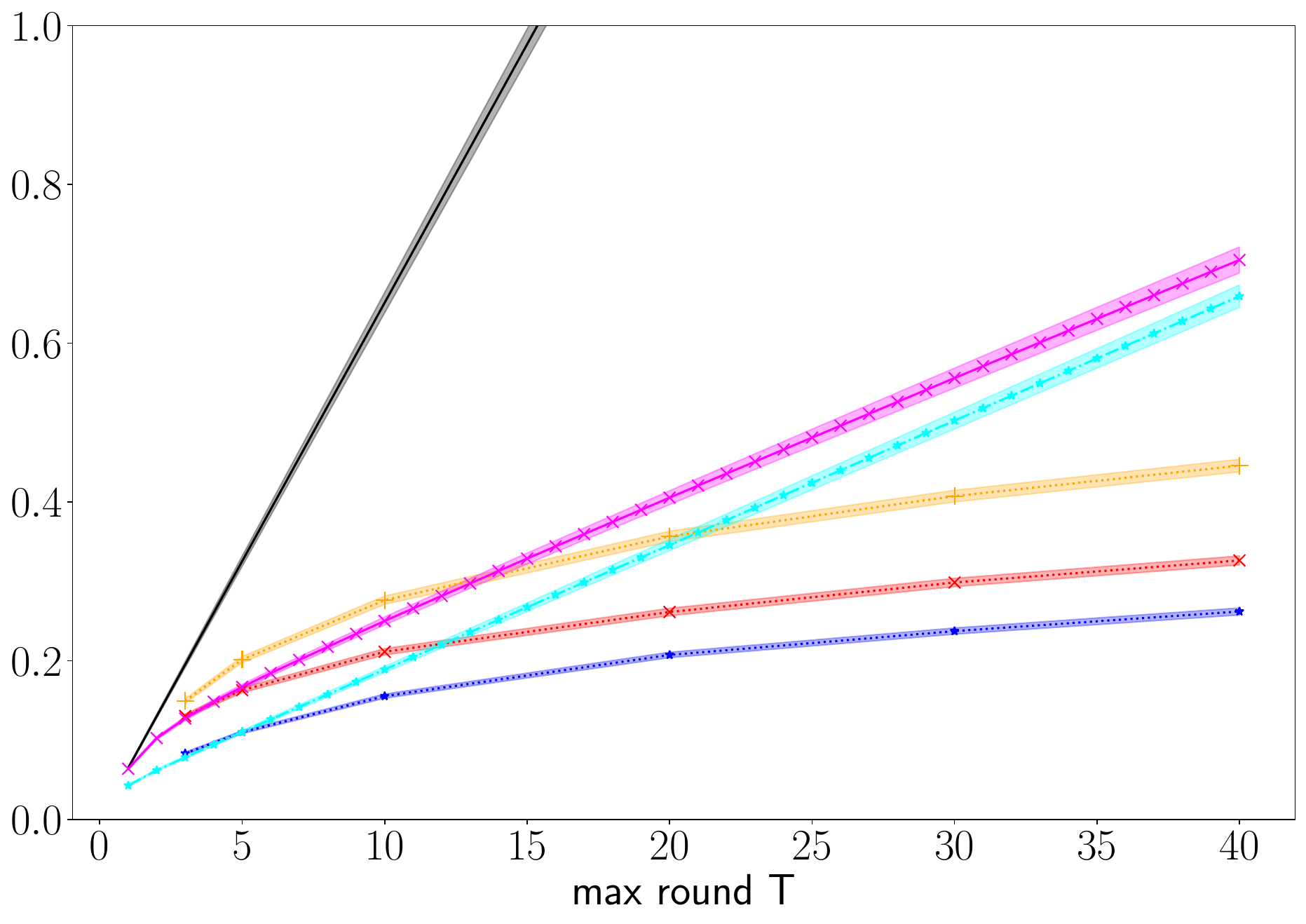}
         \caption{2x2 actions, constraint (C1) }
     \end{subfigure}
     \hfill   
     \begin{subfigure}{0.48\textwidth}
         \centering
         \includegraphics[width=0.8\textwidth]{pic/2D_uniform_game-full-cumregret_2by2action_Bcost.pdf}
         \caption{2x2 actions, constraint (C2) }
     \end{subfigure}
    \vspace{1em}
    \caption{Final cumulative regret for 2D uniformly random games from 10000 replications with linear utility function}
    \label{Appendix:fig:2D-uniform-game:cumregret}
    \vspace{1.5em}
\end{figure}

\begin{figure}[h]
     \begin{subfigure}{0.48\textwidth}
         \centering
         \includegraphics[width=0.8\textwidth]{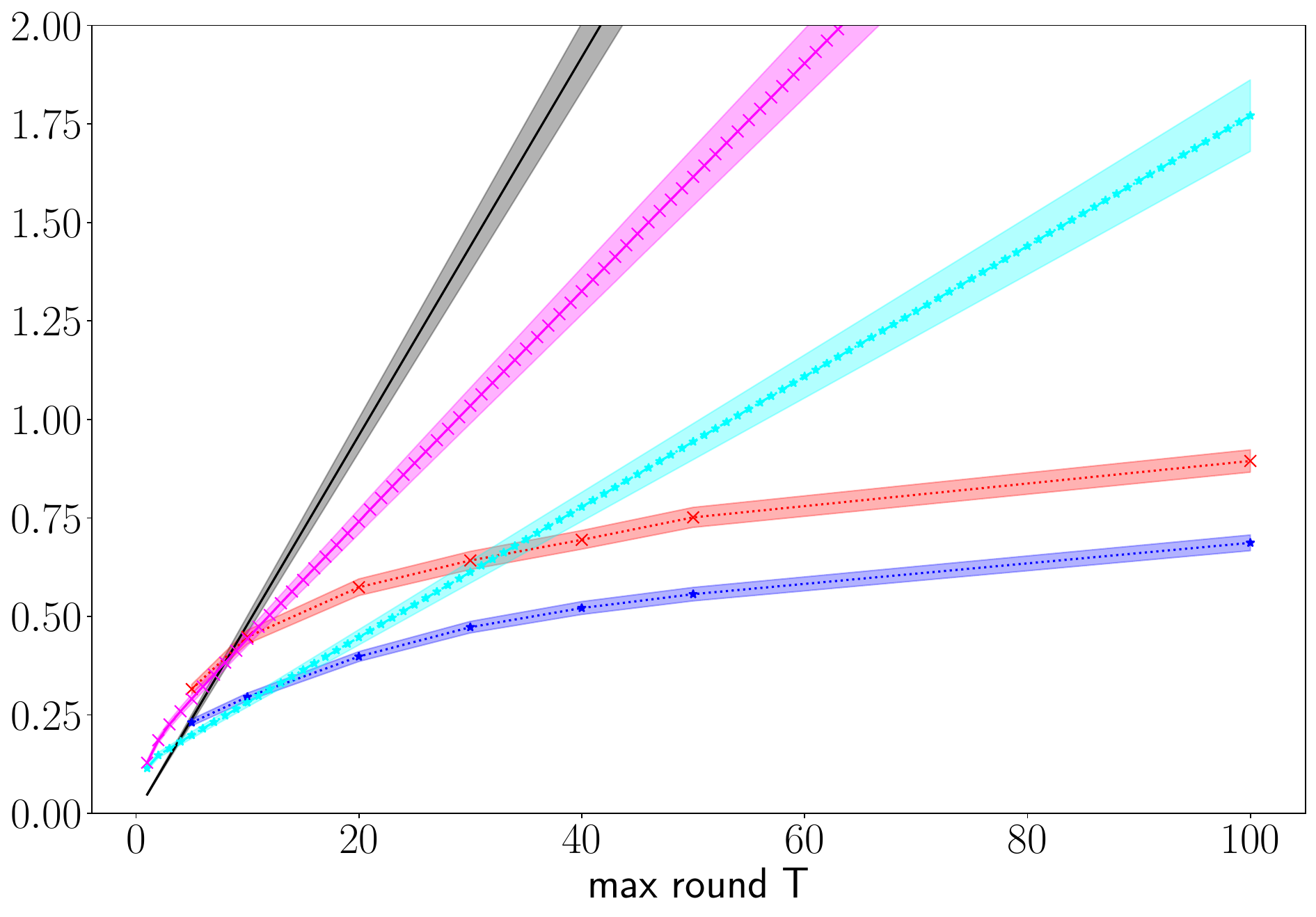}
         \caption{10x10 actions, constraint (C1)}
         \vspace{1em}
     \end{subfigure}
    \hfill 
     \begin{subfigure}{0.48\textwidth}
         \centering
         \includegraphics[width=0.8\textwidth]{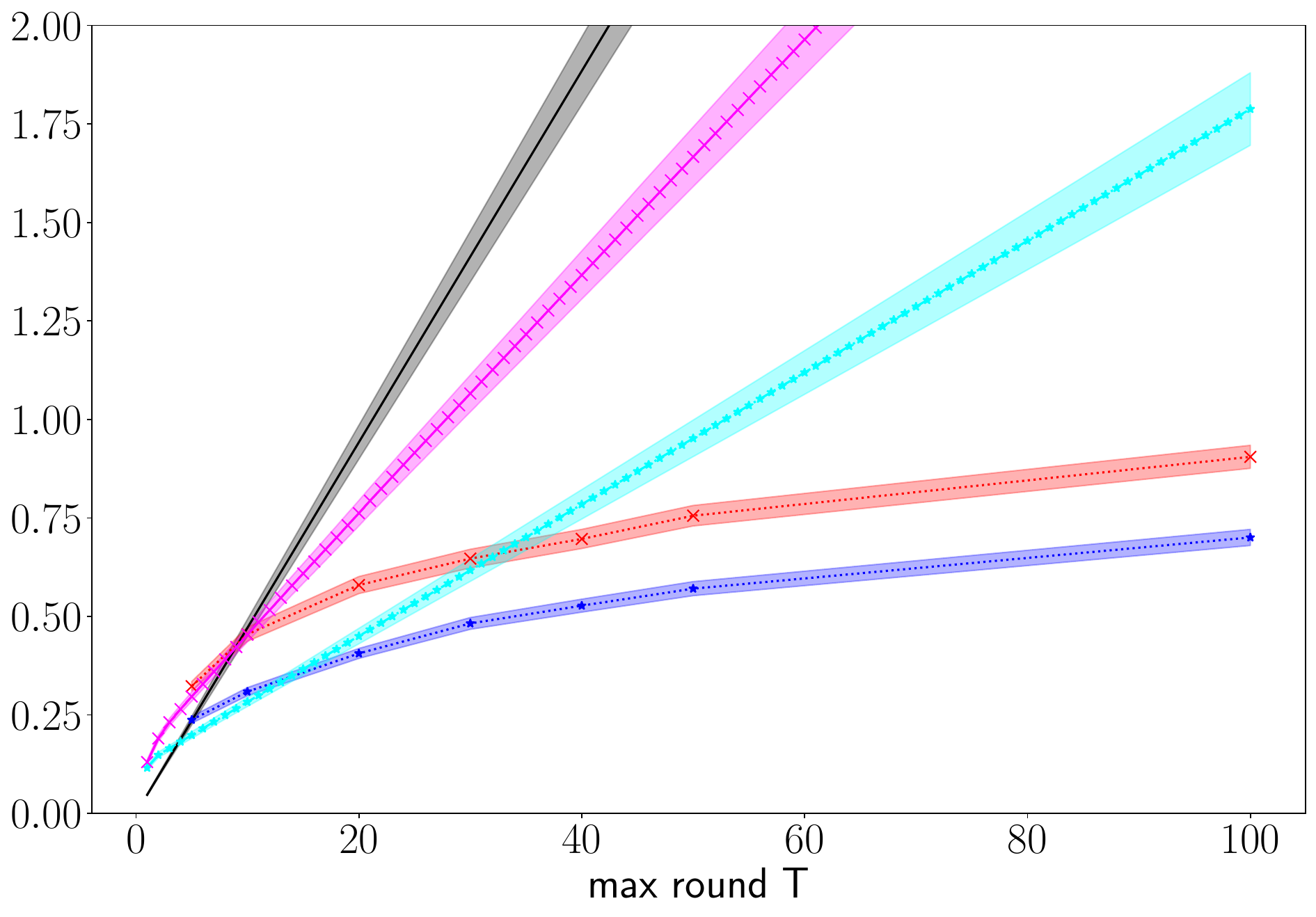}
         \caption{10x10 actions, constraint (C2) }
         \vspace{1em}
     \end{subfigure}
     \vspace{1em}
    \caption{Final cumulative regret for 2D uniformly random games from 1000 replications with linear utility function}
    \label{Appendix:fig:2D-uniform-game:cumregret-10x10}
    \vspace{1.5em}
\end{figure}
     
\begin{figure}[h]
     \begin{subfigure}{0.48\textwidth}
         \centering
         \includegraphics[width=0.8\textwidth]{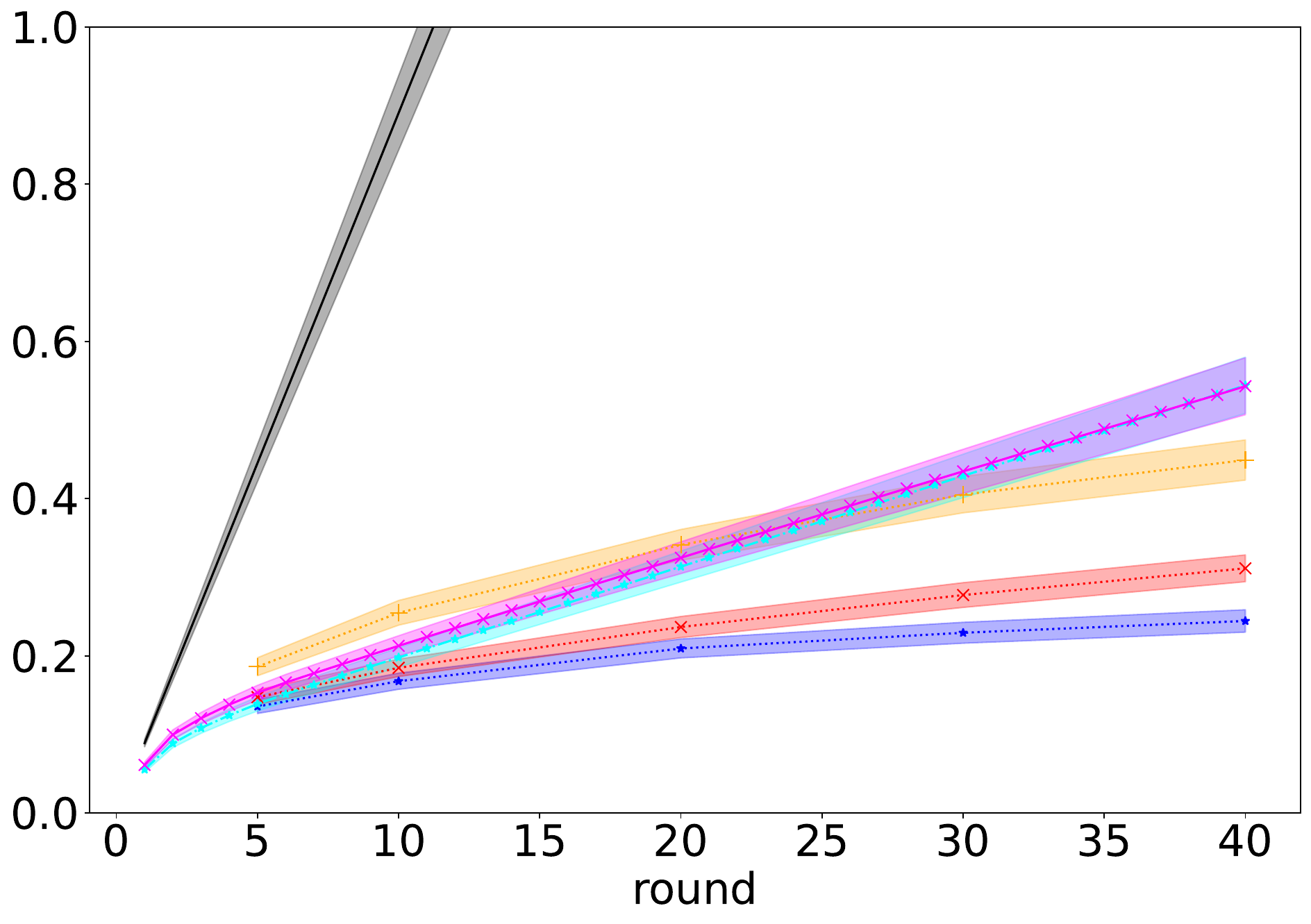}
         \caption{1x2 actions, constraint (C1)}
         \vspace{1em}
     \end{subfigure}
    \hfill 
     \begin{subfigure}{0.48\textwidth}
         \centering
         \includegraphics[width=0.8\textwidth]{pic/3D_uniform_game-full-cumregret_1by2action_Bcost.pdf}
         \caption{1x2 actions, constraint (C2) }
         \vspace{1em}
     \end{subfigure}
     \vspace{1em}
     \begin{subfigure}{0.48\textwidth}
         \centering
         \includegraphics[width=0.8\textwidth]{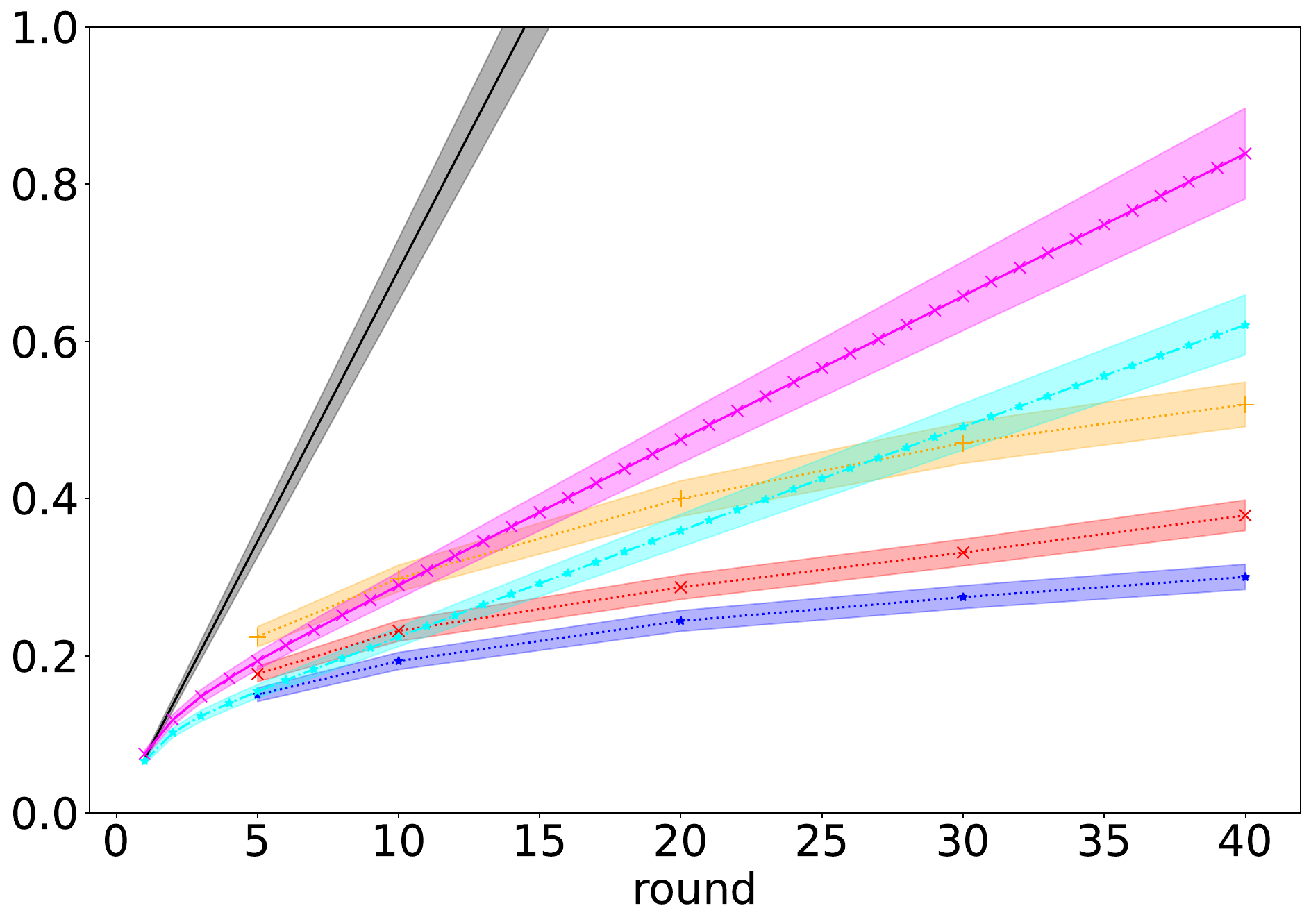}
         \caption{2x2 actions, constraint (C1) }
     \end{subfigure}
     \hfill   
     \begin{subfigure}{0.48\textwidth}
         \centering
         \includegraphics[width=0.8\textwidth]{pic/3D_uniform_game-full-cumregret_2by2action_Bcost.pdf}
         \caption{2x2 actions, constraint (C2) }
     \end{subfigure}
     \vspace{0.5em}
    \caption{Final cumulative regret for 3D uniformly random games from 1000 replications with linear utility function}
    \label{Appendix:fig:3D-uniform-game:cumregret}
    \vspace{1em}
\end{figure}

\begin{figure}[h]
     \begin{subfigure}{0.48\textwidth}
         \centering
\includegraphics[width=0.8\textwidth]{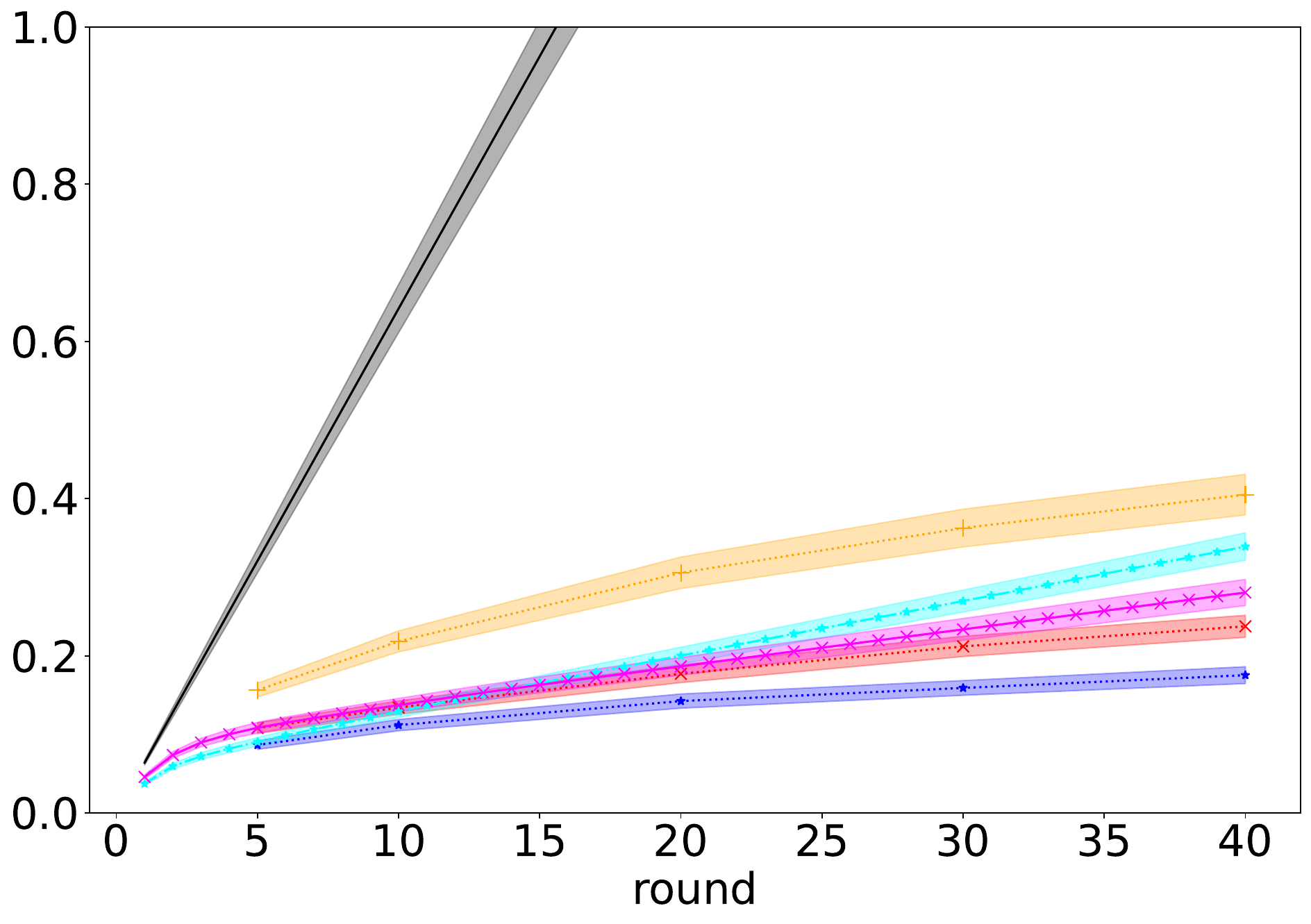}
         \caption{1x2 actions, constraint (C1)}
     \end{subfigure}
    \hfill 
     \begin{subfigure}{0.48\textwidth}
         \centering
         \includegraphics[width=0.8\textwidth]{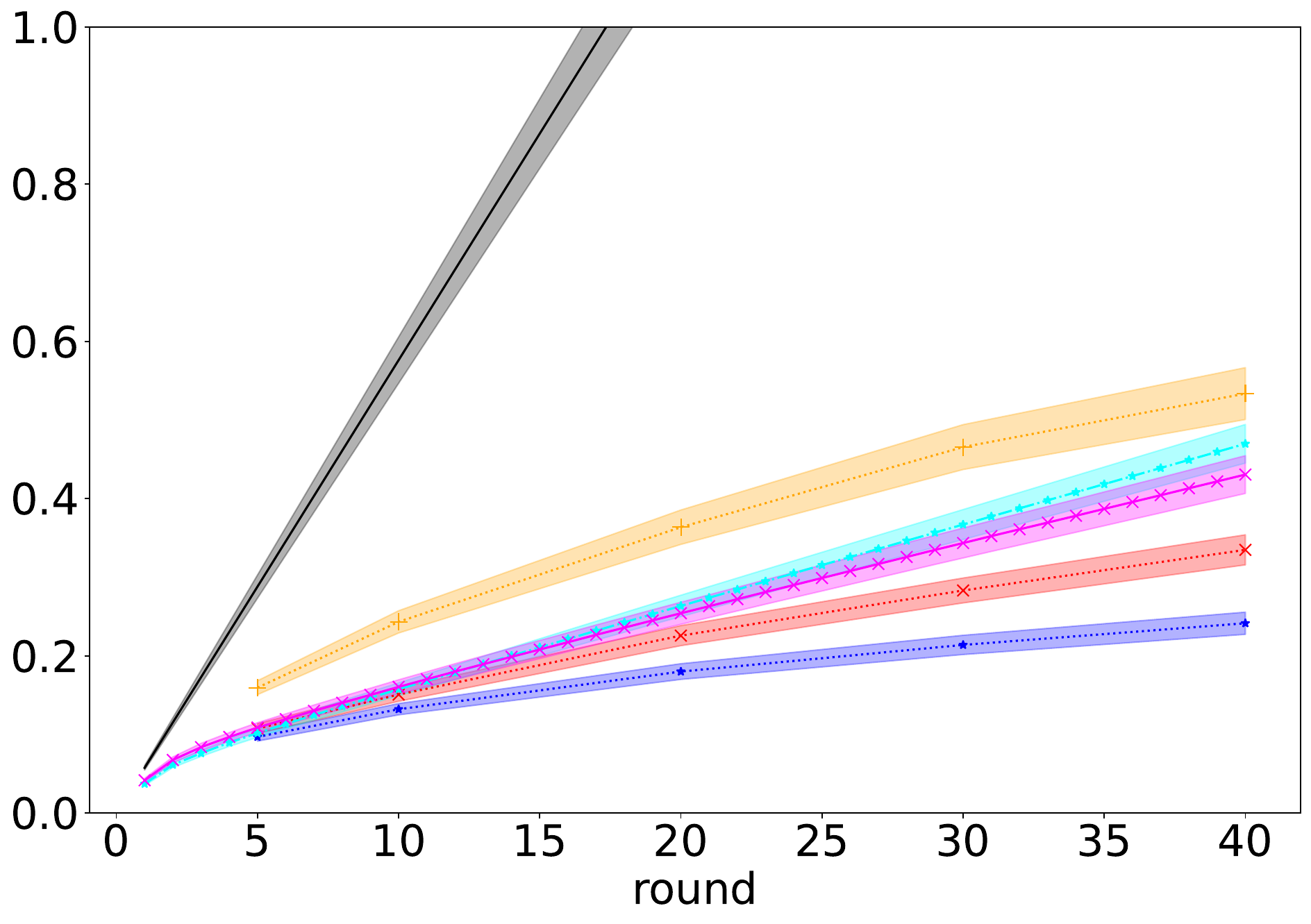}
         \caption{1x2 actions, constraint (C2) }
     \end{subfigure}
     
     \vspace{1em}
     \begin{subfigure}{0.48\textwidth}
         \centering
         \includegraphics[width=0.8\textwidth]{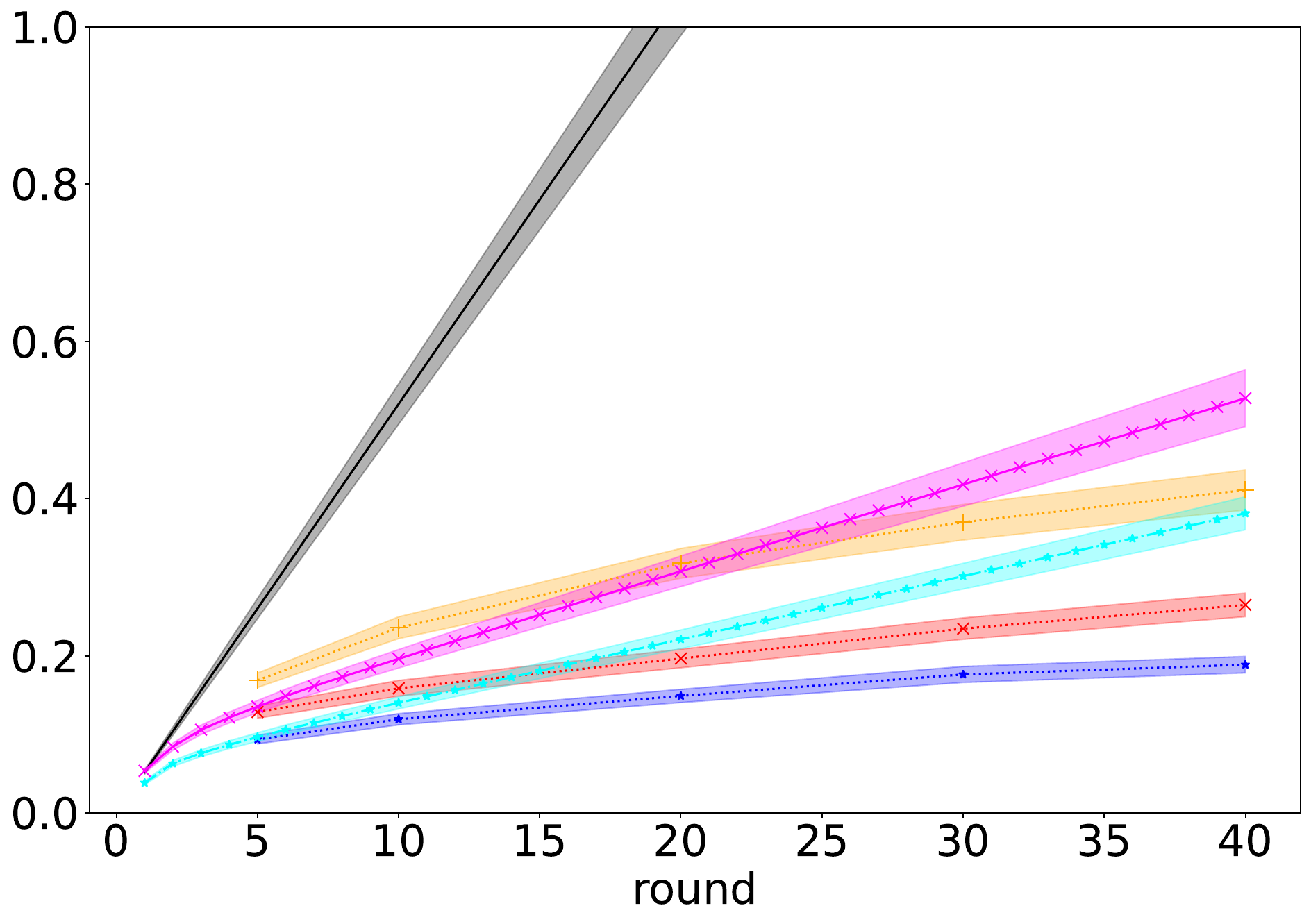}
         \caption{2x2 actions, constraint (C1) }
         \label{fig:10D-uniform-game:cumregret-2x2action-C1}
     \end{subfigure}
     \hfill   
     \begin{subfigure}{0.48\textwidth}
         \centering
         \includegraphics[width=0.8\textwidth]{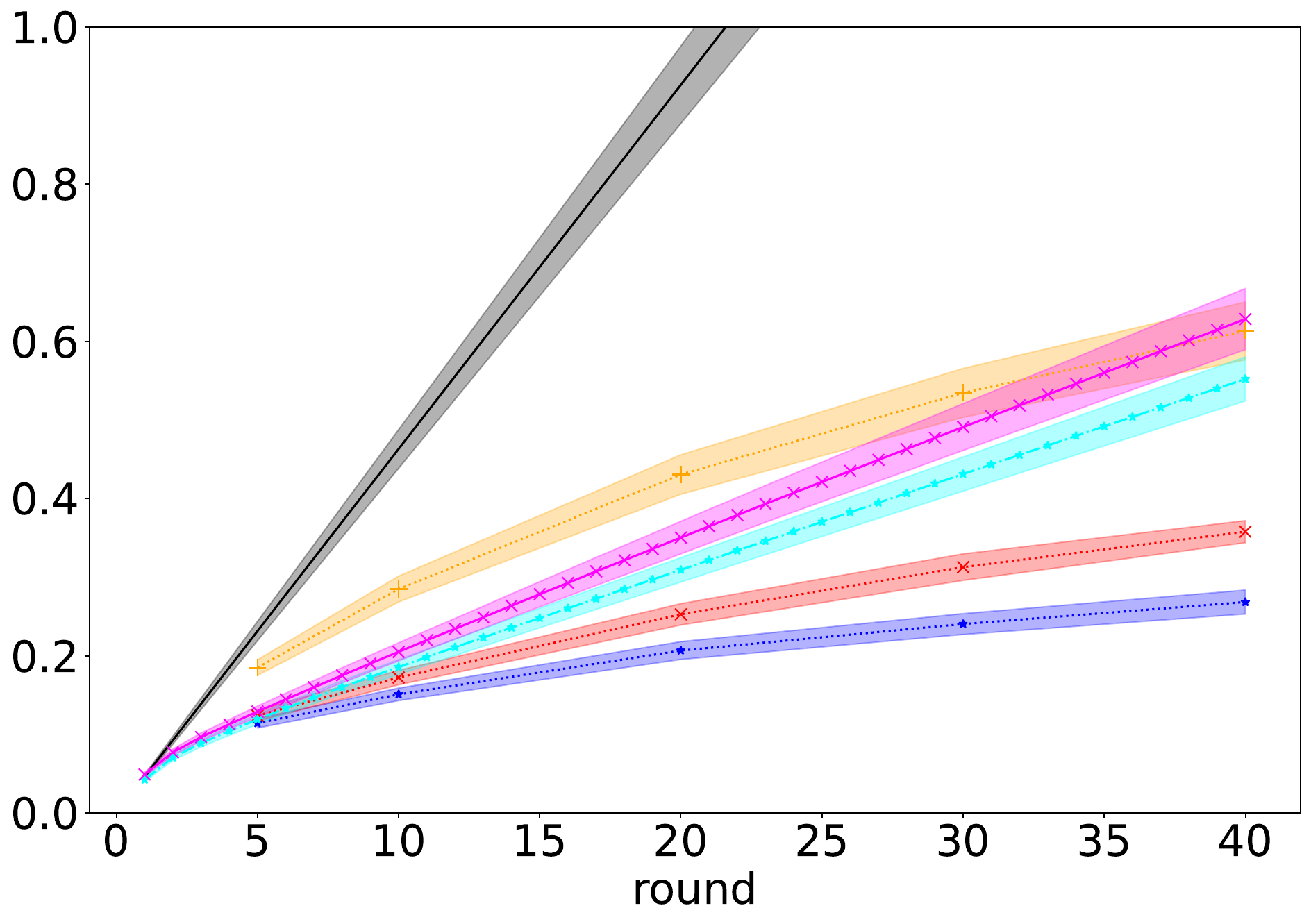}
         \caption{2x2 actions, constraint (C2) }
     \end{subfigure}
     \vspace{0.7em}
    \caption{Final cumulative regret for 10D uniformly random games from 1000 replications with linear utility function}
    \label{Appendix:fig:10D-uniform-game:cumregret}
    \vspace{2em}
\end{figure}

 \begin{figure}[h]
        \begin{subfigure}{\textwidth}
         \centering
         \vspace{-1em}
         \includegraphics[width=0.7\textwidth]{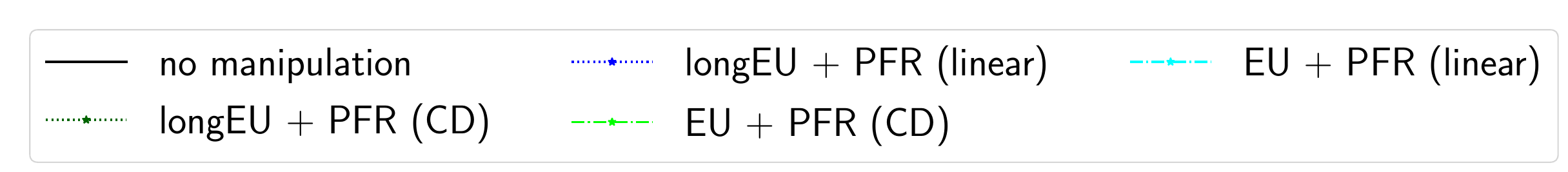}
     \end{subfigure}
     \begin{subfigure}{0.48\textwidth}
         \centering
         \includegraphics[width=0.8\textwidth]{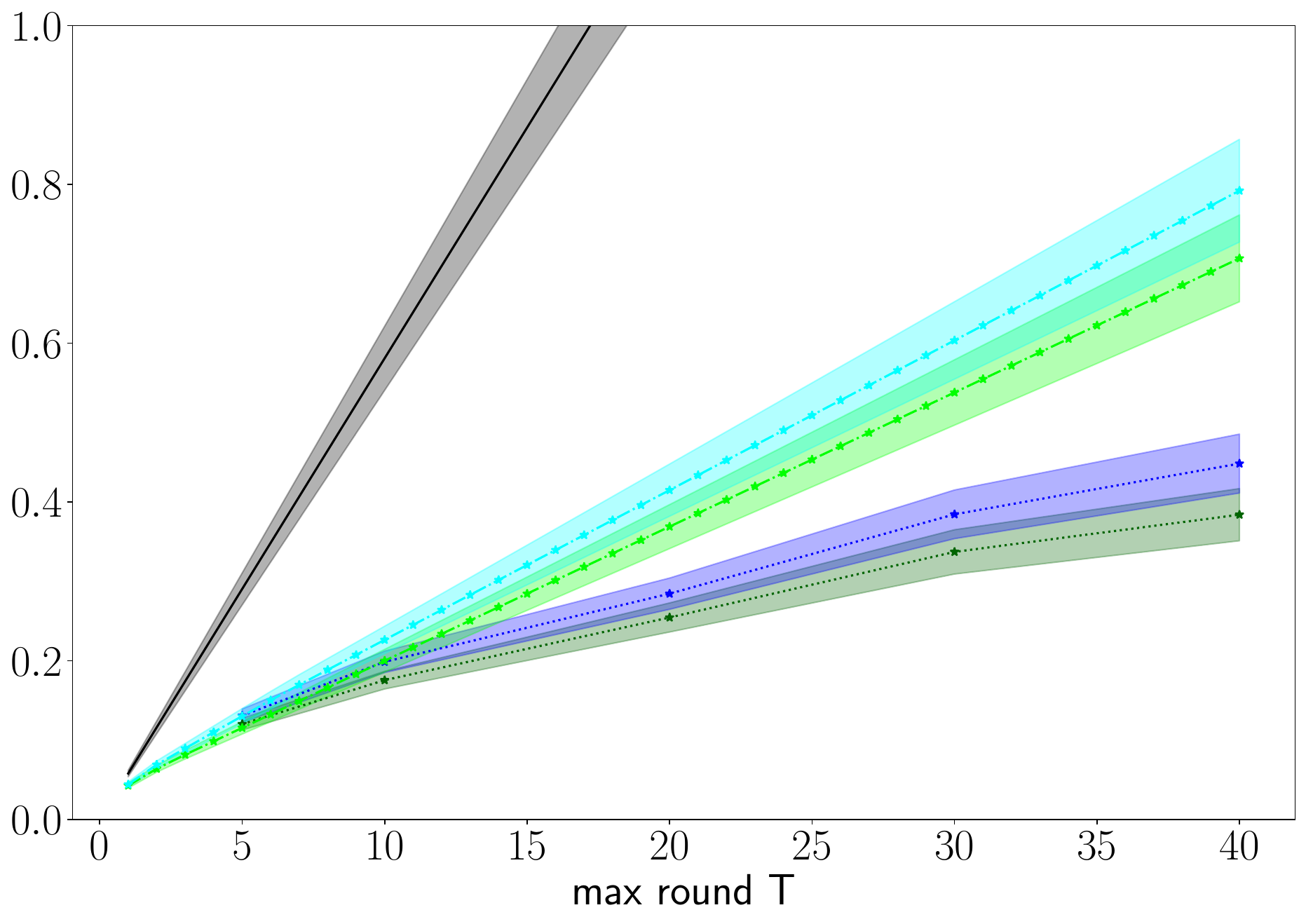}
         \caption{2x2 actions, constraint (C1)}
         \vspace{1em}
     \end{subfigure}
    \hfill 
     \begin{subfigure}{0.48\textwidth}
         \centering
         \includegraphics[width=0.8\textwidth]{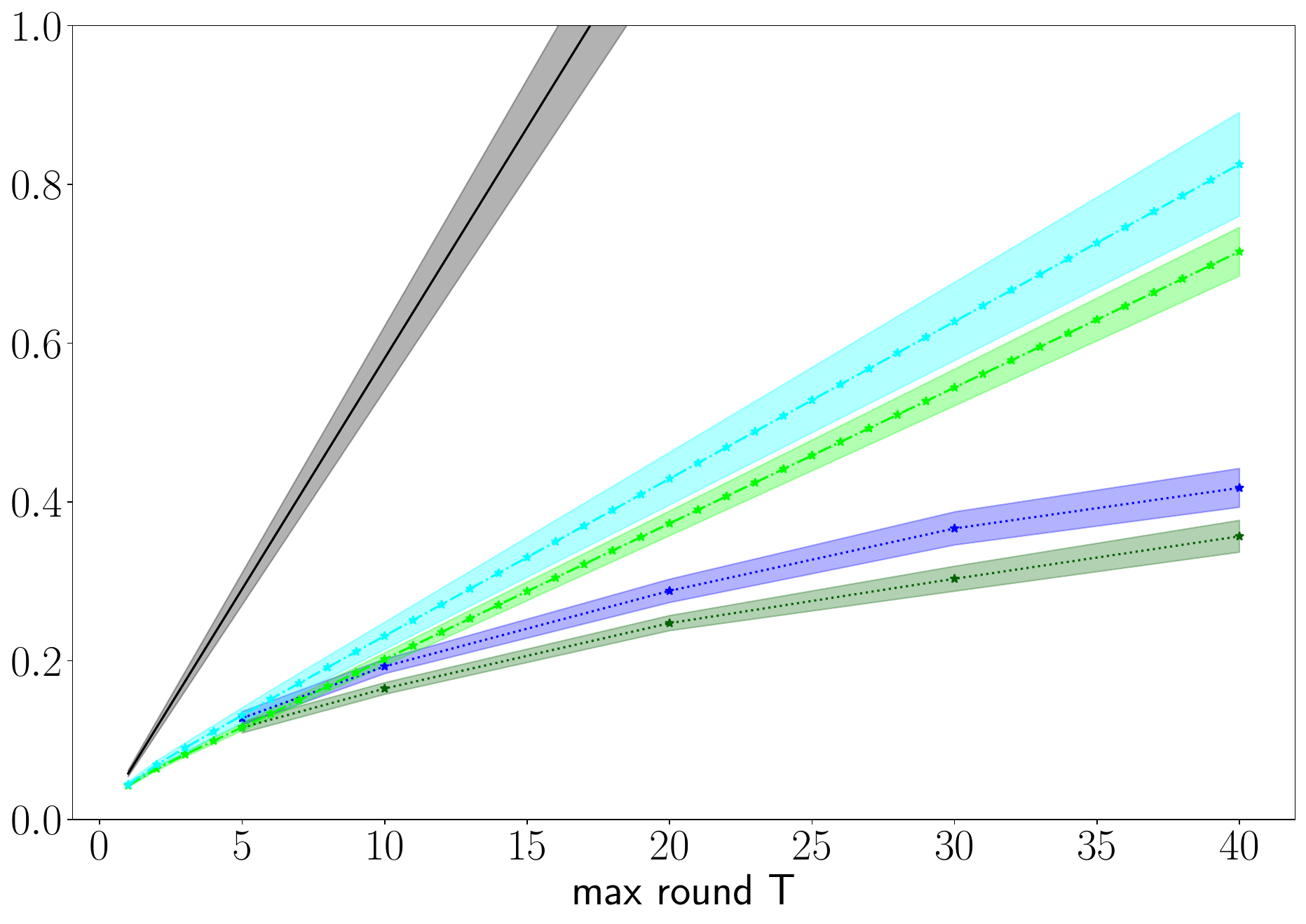}
         \caption{2x2 actions, constraint (C2) }
         \vspace{1em}
     \end{subfigure} 
    \vspace{1em}
    \caption{Final cumulative regret for 2D uniformly random games from 1000 replications with Cobb-Douglas utility function}
    \label{Appendix:fig:2D-uniform-game:cumregret-CD}
\end{figure}
\end{document}